\newcommand{\C}[1]{ \ensuremath{ C \left(#1\right) } }
\newcommand{\bigO}[1]{ \ensuremath{ \mathcal{O} \left(#1\right) } }
\newcommand{\abs}[1]{\ensuremath{\left|#1\right|}}
\newcommand{\norm}[1]{\ensuremath{\left\|#1\right\|}}
\newcommand{\set}[1]{\ensuremath{\left\{#1\right\}}}
\newcommand{\figwidth}{7cm}
\newcommand{\figscale}{1}
\newcommand{\herm}[1]{\ensuremath{#1^*}}
\newcommand{\pn}{ \ensuremath{ \sigma_N^2} }
\newcommand{\nw}{ \ensuremath{ N_w} }
\newcommand{\pji}{ \ensuremath{ p_{j\rightarrow i}} }
\newcommand{\UE}{MD} 
\newcommand{\UEs}{MDs}
\newtheorem{thm}{Theorem}
\newtheorem{cor}{Corollary}
\newtheorem{lem}{Lemma}
\newtheorem{prop}{Proposition}
\newtheorem{rem}{Remark}
\newlength\figureheight 
\newlength\figurewidth
\def\addlegendimage{\csname pgfplots@addlegendimage\endcsname}
\pgfplotsset{compat=newest}
\title{An Efficient Clustering Algorithm for Device-to-Device Assisted Virtual MIMO}
\author{S. Hossein Seyedmehdi and  Gary Boudreau
\thanks{S. H. Seyedmehdi is with the Department of Electrical and Computer Engineering, University of Toronto, Toronto, ON, M5S 3G4, Canada (email: hossein@comm.utoronto.ca). G. Boudreau is with Ericsson, Ottawa, ON, Canada (email: gary.boudreau@ericsson.com).} 
\thanks{This work was supported in part by Ericsson Canada and the Natural Sciences and Engineering Research Council (NSERC) of Canada.}
}
\begin{document}


\maketitle

\begin{abstract}
In this paper, the utilization of mobile devices (MDs) as decode-and-forward relays in a device-to-device assisted virtual MIMO (VMIMO) system is studied. Single antenna \UEs~are randomly distributed on a 2D plane according to a Poisson point process, and only a subset of them are sources leaving other idle \UEs~available to assist them (relays). Our goal is to develop an efficient algorithm to cluster each source with a subset of available relays to form a VMIMO system under a limited feedback assumption. We first show that the NP-hard optimization problem of precoding in our scenario  can be approximately solved by semidefinite relaxation. We investigate a special case with a single source and analytically derive an upper bound on the average spectral efficiency of the VMIMO system. Then, we propose an optimal greedy algorithm that achieves this bound. We further exploit these results to obtain a polynomial time clustering algorithm for the general case with multiple sources. Finally, numerical simulations are performed to compare the performance of our algorithm with that of an exhaustive clustering algorithm, and it shown that  these numerical results corroborate the efficiency of our algorithm. 
\end{abstract}

\begin{IEEEkeywords}
Cooperative diversity, clustering algorithms, virtual MIMO (VMIMO), semidefinite relaxation (SDR). 
\end{IEEEkeywords}

\section{Introduction}
 
\IEEEPARstart{T}{he} use of multiple transmit and receive antennas (MIMO) has been perceived as a promising technique to enhance the spectral efficiency of wireless systems. In practice, due to the size limitation, only one transmit  antenna can usually fit inside each mobile device (\UE) especially in low cost legacy devices. However, multiple single-antenna \UEs~can be clustered together to create a virtual MIMO (VMIMO) system. In most current wireless standards, including the long term evolution (LTE) and IEEE 802.11, \UEs~that are scheduled to transmit comprise only a small fraction of all present \UEs, thus leaving numerous idle \UEs~available to form device-to-device assisted VMIMO systems. One of the new challenges that arises in this realm is that given a large number of \UEs, a proper subset of them must be selected to form the VMIMO system (also known as the \emph{clustering} or  \emph{user grouping} problem).  

Utilizing  a cooperating device has been shown to improve the spectral efficiency and diversity while alleviating the outage behaviour \cite{Laneman2003, Laneman2004, Sendonaris2003a}. This cooperation is performed by a second wireless device (the relay) relaying the message of a first wireless device (the source). 

There is a rich body of literature on how to select the proper relay(s) to cooperate with the source(s) that can be categorized into three trends: In  the earlier works, e.g., \cite{Ibrhim2008, Bletsas2006, Adinoyi2009, Zhao2006}, selection of a \emph{single} best relay (for a single source) based on different performance metrics including the best instantaneous channel condition \cite{Bletsas2006} or the trade off between the amplify-and-forward  (AF) error probability and the power consumption \cite{Zhao2006} is studied. 
 Later works have generalized the single relay selection concept to \emph{multiple} relay selection in an effort to find  optimal relay selection methods under various assumptions~\cite{Michalopoulos2006, Jing2009, Ikki2009, Hwang2009, Guo2010, Shah2009, Vakil2008, Nam2008, Yu2012}. Most of these works consider AF relaying (e.g., \cite{Michalopoulos2006, Jing2009, Ikki2009, Hwang2009, Guo2010}). The  DF two-hop relaying technique is considered in \cite{Vakil2008, Nam2008, Yu2012}. 
Both  \cite{Nam2008} and  \cite{Yu2012} investigate algorithms to select $m$ best relays out of $N$ uniformly distributed available relays for a single source, and obtain  approximations on the cumulative distribution function (CDF) of the source spectral efficiency when $m$ relays are selected. 
In recent works, relay selection for multiple sources is considered in \cite{Wu2012, Beres2008, Yu2012a, Zhou2013, Li2010}.  AF relaying is considered in \cite{Wu2012}  where $L$ relays are selected. DF selection is considered in  \cite{Beres2008, Yu2012a, Zhou2013, Li2010}.  In \cite{Beres2008} and \cite{Yu2012a}, orthogonal channels are assumed for each source, thus eliminating the effect of the leakage interference. In \cite{Beres2008} and \cite{Zhou2013} a fixed number of relays (one) is selected for each source.  The relay selection problem is approached from a \emph{pricing based game model} in \cite{Li2010, Wang2009}. In~\cite{Li2010} each relay demands a price (virtual currency) for cooperation, and each source bids a price to recruit relays, and in addition, a \emph{competitive price adjustment process} is discussed. Although selection is not a new concept, a scarcity of studies on clustering (grouping) algorithms  from a MIMO perspective is apparent. For instance, most previous works including \cite{Wu2012, Beres2008, Yu2012a, Zhou2013, Li2010}  consider only single antenna at the receiver and fail to examine the effect of the precoding.


Recently, the problem of clustering algorithms for VMIMO has gained  increasing attention since VMIMO is recognized  as a promising future trend of  communication systems \cite{Kurve2009, Bhat2012, Yu2011}. 
The problem of joint grouping and precoding for fixed size VMIMO where precoding weights are continuous is studied in \cite{Hong2013}. 
The uplink pair selection problem (e.g., \cite{
Rui2012}) is extended to multi user uplink grouping for a single destination in~\cite{Karimi2013}. In this work a fixed number of users are selected (and grouped together) such that a proportional fairness utility  is maximized. Nonetheless, the cooperation between  nodes is not considered, and each node transmits its individual message.

In this paper, we treat the problem of  clustering algorithms for multiple sources by utilizing idle \UEs~as assisting devices (relays) in a VMIMO setup with limited feedback. In the scenario under consideration, all \UEs~share the same channel, and they are spatially distributed according to a Poisson point process. A source either transmits its message without forming a VMIMO system, or it can be clustered with other idle \UEs~in a  VMIMO setup if the latter act improves its spectral efficiency. When a source participates in the VMIMO configuration, its message is conveyed to its serving  access point (AP) in a two-phase DF relaying manner, and unlike some prior studies (e.g., \cite{Nam2008, Yu2012}), the direct link is also considered. Moreover, an approximate precoding is performed by which the transmit signal of each \UE~in the second phase is multiplied by a discrete unity complex number. Due to the limited feedback assumption, unlike \cite{Hong2013}, these precoding weights are chosen from a finite cardinality codebook.


To the best of our knowledge, this is the first work that studies the network performance as a function of the precoding and density of randomly distributed \UEs~and proposes an \emph{efficient}  clustering algorithm for the VMIMO setup (as modelled in Section \ref{sec:system_model_vmimo}). Towards this goal, in Section \ref{sec:precoding_approximation}, we approximately solve the NP-hard problem of precoding optimization by a combination of  the search space reduction  and semidefinite relaxation (SDR). In Section \ref{sec:one_stue}, we study a special case with a single source and analytically derive an upper bound on the achievable spectral efficiency. This upper bound encompasses the stochastic geometry of the problem as well as the randomness in channel gains due to the log-normal shadowing. Then we propose a greedy algorithm with quadratic complexity in the number of \UEs~that achieves this bound, and hence, it is optimal in this case. 
We then leverage this knowledge in Section \ref{sec:clustering_alg_multi_source} to develop a  clustering   algorithm for multiple sources. Our proposed  algorithm is efficient in the following senses: firstly, it is run in polynomial time; secondly, it eliminates the need for the backhaul communication between APs since it is not required to control the leakage interference; thirdly, it can achieve significant performance gains (cf. Fig.  \ref{fig:SINR_imp_vs_base_SINR}, Fig. \ref{fig:SINR_imp_CDF}, and Table \ref{tbl:harmonic_mean_imp}). In Section \ref{sec:numerical_results}, numerical results are provided demonstrating, firstly, that the performance of our algorithm is very close to that of an exhaustive clustering, and secondly, that our algorithm can improve the tradeoff between the spectral efficiency and energy efficiency of the implementation.

\section{System Model and Problem Statement} \label{sec:system_model_vmimo}

\subsection{Notations and Definitions}

The terms \emph{rate} and \emph{spectral efficiency} are used interchangeably to indicate the average number of bits per second per hertz (bps/Hz) that can be conveyed through the communication channel. 
The channel is assumed to be a complex AWGN channel with noise distribution $\mathcal{CN}(0, \pn)$. By definition, let $C(x) =\log_2 (1 + x)$. The \emph{ceiling} of $x$ is shown by $\lceil x \rceil$. Calligraphic letters (e.g., $\mathcal{A}$) are used to represent sets. The backslash notation is used to represent the set subtraction, i.e., $\mathcal{A}\backslash \mathcal{B} \stackrel{\Delta}{=} \mathcal{A} - \mathcal{B}$. Vectors are shown by small bold faced letters (e.g., $\mathbf{h}$), and matrices are shown by capital bold faced letters (e.g., $\mathbf{H}$). The $i$-th element of a vector $\mathbf{h}$ is shown by $h_i$. The conjugate transpose of a matrix is shown by $\herm{(\cdot)}$. 
The operator $\abs{\cdot}$ is either the absolute value when its operand is a complex (or real) number or the set cardinality if it operates on sets. $I_N$ represents the unity matrix with dimension $N \times N$. 

\subsection{Distribution of \UEs~and Scheduling Model}

We assume that \UEs~are randomly distributed in a two dimensional field according to a Poisson point process with rate $\lambda$ \UEs~per unit area. The Poisson assumption has been generally accepted  as a proper model for the spatial distribution of \UEs~in wireless networks specially in the presence of a large population of users (cf.  \cite{Haenggi2009, Ilow1998, Buratti2011}).

Let $\xi$ be a realization of the spatial distribution of \UEs, i.e., there are $N_{\UE}(\xi)$ \UEs~on the field given this realization. We further assume that $N_{\UE} (\xi)$ remains stationary for a relatively long time. Therefore, we omit $\xi$ hearafter and let  \UEs~be indexed as $1, 2, \cdots, N_{\UE}$. Without loss of generality, we  assume there are $M$ APs, and the scheduler schedules one \UE~per AP; therefore, there are $M$ sources. Let these $M$ sources be indexed as $1, \cdots, M$ where $M \le N_{\UE}$. 

\begin{figure}[t]
\begin{center}	
 	\includegraphics[width = \figwidth]{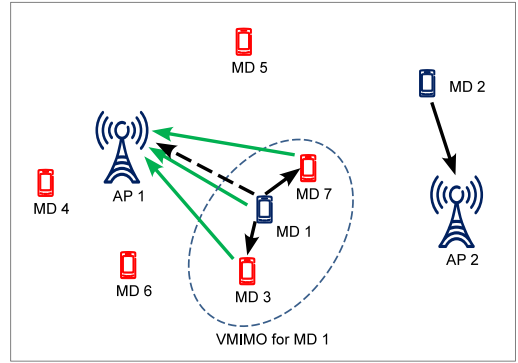}
	\caption{Illustration of the system model for $M=2$ destinations (APs) and 7 \UEs~($N_{\UE} = 7$).  In this case, \UEs~$1$ and $2$ are scheduled to transmit (sources), and \UEs~$3$ and $7$ are assisting the \UE~$1$ in the VMIMO configuration. Black lines represent transmission in the first phase, and green lines represent transmission in the second phase. 
	\label{fig:system_model}}
\end{center}
\end{figure}

\subsection{Structure of the VMIMO and Corresponding Rates}

It is assumed that each \UE~is equipped with one transmit antenna, and each AP is equipped with $N_{rx}$ receive antennas. Each source $s$ either  directly transmits to its respective destination without the assistance of other relays or adopts a two-phase transmission where it is assisted by the set of idle \UEs~(relays) $\mathcal{A}_s$. In the latter case, in phase one, the source broadcasts its message (codeword), and  all \UEs~$k$, $k\in\mathcal{A}_s$, decode the message of the source while the AP postpones the decoding. In phase two, the source $s$ and all the \UEs~$k$, $k\in\mathcal{A}_s$,  transmit the same message as that of the phase one with the precoding, and then, the AP decodes this message by augmenting received signals in phases one and two.
 In more precise words, if the source  $s$ is assisted by other \UEs, it transmits the codeword $\mathbf{x}_j^{(1)}$ in phase one and repeats the same codeword in phase two. On the other hand, if the source $s$ is not assisted by other \UEs, it transmits $\mathbf{x}_j^{(q)}$ in phase $q$ where $q=1,2$. 
We further assume that the symbols in the a codeword are power normalized, i.e., $E[\abs{x_j}^2] = 1$.

Let $\mathbf{y}_d^{(1)}$ be the received signal in phase one at a specific time at the AP $d$. We  have 
$\mathbf{y}_d^{(1)} = \sum_{j = 1} ^{M} \mathbf{h}_{jd} \sqrt{P_j} x_j^{(1)} + \mathbf{n}_d ^{(1)} $
where $\mathbf{y}_d^{(1)}$ is an $N_{rx}\times 1$ vector, $\mathbf{h}_{jd}$ is an $N_{rx}\times 1$ vector whose elements are the channel gains between the \UE~$j$ and the AP $d$,  $x_j^{(1)}$ is  the power normalized symbol transmitted by the source $j$ in phase one,  $\mathbf{n}^{(q)}_d$ is the AWGN vector in phase $q$, and $P_j$ is the transmit power of the \UE~$j$. Also, let $\mathcal{L} = \set{1, \cdots, L}$ be the subset of sources that are assisted in the described VMIMO setup.  
 In the second phase, \UEs~that participate in VMIMO, precode their transmit signal with a unity complex weight factor. 
 Therefore, the received signal at the AP $d$ in phase two can be written  as
$\mathbf{y}_d^{(2)} = \sum_{j = 1} ^{L} \mathbf{H}_{jd} \mathbf{w}_j x_j^{(1)}  +\sum_{j = L+1} ^{M} \mathbf{h}_{jd} x_j^{(2)}  + \mathbf{n}_d ^{(2)}$
where 
$
 \mathbf{H}_{jd} = \begin{bmatrix} \mathbf{h}_{jd} \sqrt{P_{j}} & \mathbf{h}_{k_1d} \sqrt{P_{k_1}} & \cdots & \mathbf{h}_{k_{\abs{\mathcal{A}_j}}d} \sqrt{P_{k_{\abs{\mathcal{A}_j}}}} \end{bmatrix}$, $k_i \in \mathcal{A}_j
$.
The column vector $\mathbf{w}_j$ contains $\abs{\mathcal{A}_j} + 1$ precoding weights $w_{jk}$ where 
$
 w_{jk} \in \set{1, w, \cdots, w^{\nw-1} }
$
and $w$ is the $\nw$-th principal root of unity. 
The augmented received signal at the AP $d$ after two phases is
\begin{IEEEeqnarray}{rCl}
\mathbf{y}_d &=& \sum_{j = 1}^{L}\begin{bmatrix}\mathbf{h}_{jd} \sqrt{P_j} \\ \mathbf{H}_{jd}\mathbf{w}_j\end{bmatrix} x_j^{(1)} + \sum_{j = L+1}^{M}\begin{bmatrix}\mathbf{h}_{jd}\sqrt{P_j} \\ 0 \end{bmatrix} x_j^{(1)} 
\nonumber \\&&
+ \sum_{j = L+1}^{M}\begin{bmatrix}0 \\\mathbf{h}_{jd}\sqrt{P_j} \end{bmatrix} x_j^{(2)} + \begin{bmatrix} \mathbf{n}_{d}^{(1)} \\ \mathbf{n}_{d}^{(2)}  \end{bmatrix}  .
\end{IEEEeqnarray}

Employing a linear MMSE decoder \cite{Tse2005}, we first obtain the spectral efficiency of  sources that are not assisted by relays. For the source $s$, $s\in\{L+1, \cdots, M\}$, the capacity in phase $q = 1,2$ can be computed as
$c_s^{(q)} = \C{P_s \herm{\mathbf{h}_{sd}} \mathbf{K}_{\mathbf{z}^{(q)}}^{-1}\mathbf{h}_{sd}}$
where 
\begin{IEEEeqnarray*}{rCl}
\mathbf{K}_{\mathbf{z}^{(1)}} &=& 
\sum_{j=1, j\neq s}^{M}P_j \mathbf{h}_{jd} \herm{\mathbf{h}_{jd}} + \pn  I_{N_{rx}}, 
\\
\mathbf{K}_{\mathbf{z}^{(2)}} 
& = & \sum_{j=1}^{L} \mathbf{H}_{jd} \mathbf{w}_{j} \herm{\mathbf{w}_{j}} \herm{\mathbf{H}_{jd}} + \sum_{j=L+1, j\neq s}^{M} P_j \mathbf{h}_{jd} \herm{\mathbf{h}_{jd}} 
+ \pn  I_{N_{rx}}.
\end{IEEEeqnarray*}
The overall rate of the source $s$ is the average of its rates over two phases, i.e., 
\begin{IEEEeqnarray}{rCl}\label{eqn:R_t_no_VMIMO}
r_s = \frac{1}{2}\left( c_s^{(1)} + c_s^{(2)}  \right), \quad s\in\set{L+1, \cdots, M}.
\end{IEEEeqnarray}

The aggregate capacity between the source $s$ when assisted by the relays $\mathcal{A}_s$ in two phases and the destination $d$, can be written as 
\begin{IEEEeqnarray}{rCl}\label{eqn:aggregate_rate_assisted_ue}
c_s = \C{ \herm{\tilde{\mathbf{h}}_{sd}} \mathbf{K}_{\mathbf{z}_s}^{-1} \tilde{\mathbf{h}}_{sd} }
\end{IEEEeqnarray}
where 
$
\herm{ \tilde{\mathbf{h}}_{sd}} = \begin{bmatrix} \herm{\mathbf{h}_{sd}} \sqrt{P_s} & \herm{\mathbf{w}_s} \herm{\mathbf{H}_{sd}} \end{bmatrix},
$
and 
\begin{IEEEeqnarray}{rCl}
\mathbf{K}_{\mathbf{z}_s} &=& \sum_{j = 1, j \neq s}^L \tilde{\mathbf{h}}_{jd} \herm{\tilde{\mathbf{h}}_{jd}} + \sum_{j= L+1}^M \begin{bmatrix} P_j \mathbf{h}_{jd} \herm{\mathbf{h}_{jd}} & 0 \\ 0 &  P_j  \mathbf{h}_{jd} \herm{\mathbf{h}_{jd}} \end{bmatrix} 
\nonumber\\ &&
 + \pn I_{2N_{rx}}
\end{IEEEeqnarray}
Therefore, the overall spectral efficiency of the relay assisted source $s$ can be written as
\begin{IEEEeqnarray}{rCl}\label{eqn:rate_vmimo_ue}
r_s = \frac{1}{2}\min\set{\set{c_s} \cup \set{r_{sl}: l \in \mathcal{A}_s}}, \quad s \in \mathcal{L} 
\end{IEEEeqnarray}
where $r_{sl}$ 
is the rate between the source $s$ and \UE~$l$ (computed similar to $c_s^{(1)}$), and $c_s$ is the rate in \eqref{eqn:aggregate_rate_assisted_ue}. The half factor is present since the assisted source $s$ transmits a repeated version of its message in two consecutive time slots. 

Fig. \ref{fig:system_model} illustrates the above described system model. In this figure, 
$\mathcal{L} = \{1\}$, and $\mathcal{A}_1 = \{3,7\}$.


\subsection{Statement of the Problem}

We seek to maximize the harmonic mean utility subject to disjoint set of assisting relays for each source and quantized unity precoding weights. This goal can be mathematically expressed as  
\begin{IEEEeqnarray}{rCl}\label{eqn:formal_problem}
\max & \quad& \frac{M}{\sum_{i=1}^M r_i^{-1}}\IEEEyessubnumber\label{eqn:formal_problem_objective}\\
\text{s.t.}&& w_{jk} \in \set{1, w, \cdots, w^{\nw-1}}, \; j\in\mathcal{L};\IEEEyessubnumber\\
 && \mathcal{A}_i \cap \mathcal{A}_j  = \emptyset, \quad i\neq j;  \IEEEyessubnumber\\
&& \cup_{j=1}^{M} \mathcal{A}_j \subseteq \set{M+1,\cdots,N_{\UE}}.\IEEEyessubnumber
\end{IEEEeqnarray}


\section{An Approximation Method for the Precoding Problem} \label{sec:precoding_approximation}


To approach the NP-hard problem of precoding, we isolate the precoding search space for each VMIMO cluster (associated with one source) and show in the following that SDR  can be applied to obtain an approximate solution.

The SINR term in \eqref{eqn:aggregate_rate_assisted_ue} can be written as 
\begin{IEEEeqnarray}{rCl}\label{eqn:sinr_aggregate_rate_assisted}
\herm{ \tilde{ \mathbf{h} }_{sd}} \mathbf{K}_{\mathbf{z}_s}^{-1}\tilde{ \mathbf{h} }_{sd} 
& = & 
\begin{bmatrix} \herm{\mathbf{h}_{sd}} \sqrt{P_s} & \herm{\mathbf{w}_s} \herm{\mathbf{H}_{sd}} \end{bmatrix} \mathbf{K}_{\mathbf{z}_s}^{-1}
\begin{bmatrix}  \mathbf{h}_{sd} \sqrt{P_s} \\ \mathbf{H}_{sd} \mathbf{w}_s \end{bmatrix}
\nonumber \\ 
&=& 
 \herm{\tilde{ \mathbf{w} }_s} \mathbf{Q}_s \tilde{\mathbf{w} }_s
\end{IEEEeqnarray}
where $\tilde{\mathbf{w} }_s = \herm{ \begin{bmatrix} 1 & \herm{\mathbf{w}_s} \end{bmatrix}}$, and 
\begin{IEEEeqnarray}{rCl}
\mathbf{Q}_s & = & 
{\textstyle 
\begin{bmatrix} \herm{\mathbf{h}_{sd}} \sqrt{P_s} & 0 \\ 0 & \herm{\mathbf{H}_{sd}} \end{bmatrix} \mathbf{K}_{\mathbf{z}_s}^{-1}
\begin{bmatrix} \mathbf{h}_{sd} \sqrt{P_s} & 0 \\ 0 & \mathbf{H}_{sd} \end{bmatrix}.
}
\end{IEEEeqnarray}
The maximization of the SINR in \eqref{eqn:sinr_aggregate_rate_assisted} is tantamount to the following maximization problem
\begin{IEEEeqnarray}{rCl} \label{eqn:precoding_fixed_A}
\max &\quad & \herm{\tilde{ \mathbf{w} }_s} \mathbf{Q}_s \tilde{ \mathbf{w}}_s \IEEEyessubnumber\\
\text{s.t.} && \tilde{w}_{si} \in \set {1, w, \cdots, w^{\nw - 1}}.  \IEEEyessubnumber 
\end{IEEEeqnarray}
The optimization problem expressed in \eqref{eqn:precoding_fixed_A} is called the \emph{discrete complex quadratic optimization problem} which belongs to the class of NP-hard problems \cite{So2007}. However, it can be approximately solved by using the SDR method. 
In this paper, we first use the CVX package \cite{cvx} to solve the relaxed version of \eqref{eqn:precoding_fixed_A}. 
Next, we adopt a rank-one approximation \cite{Luo2010} in addition to the uniform quantization to find the precoding weights from the solution of the semidefinite relaxed problem.

Assuming a limited feedback, each \UE~participating in the VMIMO setup receives  the index of the precoding weight in the  precoding codebook, $\{1, w, \cdots, w^{\nw -1}\}$,  
 through some feedback mechanism. Since the size of the precoding codebook is $\nw$, this feedback requires  $\log_2(\nw)$ bits.

\section{Clustering for a Single Source} \label{sec:one_stue}

In this section we develop an optimal clustering algorithm when there is only one source ($M=1$) and one destination. This algorithm will be later used in Section \ref{sec:clustering_alg_multi_source} for the general case with multiple sources. 

To make computations tractable, we temporarily assume that $N_{rx} = 1$; therefore, the channel gain between users $l$ and $m$ can be expressed as $h_{lm} = \abs{h_{lm}}e^{j\theta_{lm}}$. We further assume that channel gains are  influenced by the path loss (PL) and log-normal shadowing\footnote{Note that the channel gain is usually composed of three major components: path loss, slow fading (shadowing) modelled as a log-normal random variable, and fast fading modelled as a Rayleigh random variable. However, the mean channel gain is only affected by the path loss and shadowing.}; therefore,  
\begin{IEEEeqnarray}{rCl}\label{eqn:channel_gain_model}
\abs{h_{lm}}^2 = Gd_{lm}^{-\alpha} 10 ^{\sigma_{dB} V_{lm}/10 }
\end{IEEEeqnarray}
 where $G$ is a constant depending on the operating frequency and the antenna gains, $d_{lm}$ is the distance between users $l$ and $m$, $\alpha$ is the path loss exponent, $\sigma_{dB}$ is the shadowing dB-spread, and  $V_{lm}$ is a standard Gaussian random variable. 
 In addition, it is assumed that  \UEs~are power controlled and the received SNR at the AP $d$ is constant, i.e.,
\begin{IEEEeqnarray}{rCl}
\abs{h_{ld}}^2  \frac{P_l}{\pn} = \gamma, \quad l = 1, \cdots, N_{\UE}.
\end{IEEEeqnarray}

\subsection{An Upper Bound on the Average Spectral Efficiency}

In order to evaluate any algorithm, a performance bound on the achievable spectral efficiency is needed. This performance bound is given in the following theorem. 

\begin{thm}\label{thm:average_throughput}
For the system described in this section, there exists an upper bound on the average spectral efficiency (bps/Hz) of the source $s$. For  $\sigma_{dB} = 0$, this upper bound is given as  
\begin{IEEEeqnarray}{rCl}\label{eqn:thm_avg_thp}
\C{\gamma} + \int_{\C{\gamma}}^{\infty}{\sum_{k \ge k_{r}}  {\frac{\left[\lambda \Psi_s(r) \right]^k } {k!}   e^{-\lambda \Psi_s(r) }  \textnormal{d}r  }   }
\end{IEEEeqnarray}
where $\lambda$ is the average \UE~density (number of \UEs~per unit area), and $\Psi_s(r)$ is the \emph{$r$-achievablity area} and given by
\begin{IEEEeqnarray}{rCl}
\Psi_s(r) = \pi \left[\frac{GP_s/\pn}{2^{2r}-1}\right]^{{2}/{\alpha} }, 
\end{IEEEeqnarray}
and $k_r$ is the \emph{$r$-necessary number of relays} and given by
\begin{IEEEeqnarray}{rCl}\label{eqn:r_neccessary}
k_{r} = \left\lceil   \sqrt{\frac{2^{2r} - 1 - \gamma}{\gamma}} - 1     \right\rceil.
\end{IEEEeqnarray}
\end{thm}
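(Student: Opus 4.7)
The plan is to express $E[r_s]$ via its tail and upper bound $P(r_s\ge r)$ by a Poisson count, exploiting the stochastic geometry of the relay locations. The starting observation is that the source can always transmit the same codeword in both phases on its own, which yields rate $\C{\gamma}$ by power control; hence $r_s\ge \C{\gamma}$ almost surely, and
\[
E[r_s] \;=\; \int_0^{\infty} P(r_s\ge r)\,\textnormal{d}r \;=\; \C{\gamma} + \int_{\C{\gamma}}^{\infty} P(r_s\ge r)\,\textnormal{d}r.
\]
It remains to show, for every $r>\C{\gamma}$, that $P(r_s\ge r)\le \sum_{k\ge k_r}[\lambda\Psi_s(r)]^k e^{-\lambda\Psi_s(r)}/k!$, after which integration yields \eqref{eqn:thm_avg_thp}.

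Fix $r>\C{\gamma}$ and suppose $r_s\ge r$ is achieved through some cluster $\mathcal{A}_s$. By \eqref{eqn:rate_vmimo_ue}, both $r_{sl}\ge 2r$ for every $l\in\mathcal{A}_s$ and $c_s\ge 2r$ must hold. For the first condition, in the absence of shadowing the source--relay SNR is $Gd_{sl}^{-\alpha}P_s/\pn$, so $r_{sl}\ge 2r$ is equivalent to $d_{sl}\le [GP_s/((2^{2r}-1)\pn)]^{1/\alpha}$. Thus each relay in $\mathcal{A}_s$ must fall inside a disk centred at the source whose area is exactly $\Psi_s(r)$, the \emph{$r$-achievability area}. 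This is the only point where the hypothesis $\sigma_{dB}=0$ is used: it turns an otherwise random reachable region into a deterministic disk.

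For the second condition, I bound the aggregate SINR independently of the discrete codebook. With $N_{rx}=1$ and $M=1$, the SINR in \eqref{eqn:sinr_aggregate_rate_assisted} simplifies (after power control and after noting the absence of interfering sources) to $\gamma + \bigl|\sum_{j\in\{s\}\cup\mathcal{A}_s} w_j\, h_{jd}\sqrt{P_j/\pn}\bigr|^2$. Each summand has modulus $\sqrt{\gamma}$, so by the triangle inequality the maximum over all unit-modulus weights (continuous, which subsumes the codebook) is $(k+1)\sqrt{\gamma}$, where $k=|\mathcal{A}_s|$. Consequently $\text{SINR}\le \gamma\,[1+(k+1)^2]$, and the requirement $c_s\ge 2r$ forces $\gamma[1+(k+1)^2]\ge 2^{2r}-1$; solving for the integer $k$ gives exactly $k\ge k_r$ as in \eqref{eqn:r_neccessary}.

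Combining the two necessary conditions, the event $\{r_s\ge r\}$ is contained in the event that the disk around the source of area $\Psi_s(r)$ contains at least $k_r$ of the idle \UEs. By the Poisson assumption and Slivnyak's theorem (conditioning on the source location leaves the remaining process Poisson), this count has mean $\lambda\Psi_s(r)$, giving the desired tail bound and hence \eqref{eqn:thm_avg_thp}. The delicate step is the aggregate-SINR bound: \eqref{eqn:aggregate_rate_assisted_ue} couples the continuous channel phases with the discrete codebook, and one must simultaneously relax the codebook discreteness and the phase alignment to obtain a bound that is clean enough to yield the integer threshold $k_r$ rather than a looser non-integer quantity.
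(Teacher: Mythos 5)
Your proposal is correct and follows essentially the same route as the paper's proof: the tail-integral decomposition of $E[r_s]$ at $\C{\gamma}$, the reduction of $\{r_s\ge r\}$ to the two necessary conditions (relays inside the disk of area $\Psi_s(r)$, and the phase-aligned SINR bound $\gamma[1+(k+1)^2]$ forcing $k\ge k_r$), and the Poisson count over that disk. The only cosmetic difference is that you phrase the final step as event containment while the paper conditions on $|\mathcal{E}_s(r)|=k$ and applies the law of total probability with a $0/1$ conditional probability; these are equivalent.
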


\begin{proof}
See Appendix \ref{sec:proof_thm_avg_thp_single_source}. 
\end{proof}

There are a number of important implications associated with Theorem \ref{thm:average_throughput}. Firstly, the second term ($\int(\cdot)\textnormal{d}r$) in \eqref{eqn:thm_avg_thp} represents  the average improvement in the spectral efficiency achieved by the formation of the VMIMO as compared with the baseline spectral efficiency $\C{\gamma}$ without VMIMO. In other words, if the source  can not benefit from the assistance of other \UEs, the second term in \eqref{eqn:thm_avg_thp} will be zero. Moreover, when the mean channel gains are influenced  by PL, the $r$-achievablilty area and $r$-necessary number of relays can be used to simplify the clustering in a practical implementation when the geographical location of \UEs~is known. 
We state these results in the following two corollaries. The proofs follow directly from the proof of Theorem \ref{thm:average_throughput}, and we skip them. 


\begin{cor}\label{cor:num_relays} Let $R_s$ be the maximum average spectral efficiency expressed in \eqref{eqn:thm_avg_thp}, and the channel gains are affected by the PL. The number of relays necessary to achieve $R_s$ is given by the $R_s$-necessary number of relays, $k_{R_s}$. \end{cor}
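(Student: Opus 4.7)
The plan is to treat this corollary as an immediate consequence of the proof of Theorem~\ref{thm:average_throughput}, specialized to $r = R_s$. The quantity $k_r$ defined in \eqref{eqn:r_neccessary} enters the proof of that theorem as the smallest integer number of cooperating relays whose combined contribution at the AP is guaranteed to raise the source's spectral efficiency up to the threshold $r$. Evaluating this quantity at the maximum $r = R_s$ should therefore give exactly the minimum number of relays required to realize the bound.

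First I would reconstruct how the formula \eqref{eqn:r_neccessary} arises. Under $\sigma_{dB} = 0$, the channel magnitudes in \eqref{eqn:channel_gain_model} reduce to deterministic path loss, and the power-control assumption gives $\abs{h_{ld}}^2 P_l / \pn = \gamma$ for every \UE. If $k$ relays lie inside the $r$-achievability disk of area $\Psi_s(r)$ -- the region in which the source-to-relay decoding constraint is also met -- then in phase two the source together with these $k$ relays can be coherently combined using optimal unit-modulus beamforming, yielding a received SNR of $(k+1)^2\gamma$ at the AP. Augmenting this with the phase-one direct link SNR $\gamma$ via MRC produces spectral efficiency $\tfrac{1}{2}\C{\gamma + (k+1)^2\gamma}$, which first crosses level $r$ precisely when $k \ge \sqrt{(2^{2r}-1-\gamma)/\gamma} - 1$. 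The ceiling in \eqref{eqn:r_neccessary} then selects the smallest such integer.

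To close the proof I would substitute $r = R_s$ and invoke the monotonicity of $k_r$ in $r$. Fewer than $k_{R_s}$ cooperating relays, under the same deterministic path loss and power-control model, produce spectral efficiency strictly below $R_s$, which establishes necessity; exactly $k_{R_s}$ relays placed inside the $R_s$-achievability area suffice under the same coherent combining, which establishes that this count is tight.

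The main thing I would be careful about -- the reason the author asserts the result ``follows directly'' from Theorem~\ref{thm:average_throughput} rather than writing it out -- is to ensure that the corollary is read under the same operating assumptions used in the theorem ($\sigma_{dB} = 0$, per-link power control, and optimal unquantized beamforming), rather than the limited-feedback quantized setting of Section~\ref{sec:precoding_approximation}, in which the required relay count would have to be revisited.
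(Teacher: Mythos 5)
Your proposal is correct and follows the same route the paper intends: the authors explicitly skip this proof, stating it follows directly from the proof of Theorem~\ref{thm:average_throughput}, and your reconstruction of $k_r$ from the condition $\C{\gamma[1+(k+1)^2]} \ge 2r$ (coherent combining of the source and $k$ relays plus the phase-one direct link) evaluated at $r = R_s$ is exactly that argument. Your closing caveat about reading the corollary under the $\sigma_{dB}=0$, unquantized-beamforming assumptions is also the right one.
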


\begin{cor}\label{cor:achievablity_area} Let $R_s$ be the maximum average spectral efficiency expressed in \eqref{eqn:thm_avg_thp}, and the  channel gains are affected by the PL. The expected area in which assisting \UEs~are located is bounded by a disk centred  at \UE~$s$ with radius  $\sqrt{\Psi_s(R_s) / \pi}$. \end{cor}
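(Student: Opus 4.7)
The plan is to extract a direct geometric interpretation of $\Psi_s(r)$ from the underlying path-loss model and instantiate it at the optimal rate $r=R_s$. This is essentially a corollary by inspection of the proof of Theorem~\ref{thm:average_throughput}, so I would avoid any new probabilistic machinery and work purely with the deterministic link geometry.

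First, I would invert the definition of $\Psi_s(r)$ to recover the underlying radius, writing $\Psi_s(r)=\pi\rho_s(r)^2$ with $\rho_s(r)=\bigl(GP_s/\pn/(2^{2r}-1)\bigr)^{1/\alpha}$. I would then show that $\rho_s(r)$ is exactly the largest source-to-relay distance at which the source-relay link supports rate $r$. Concretely, in the PL-only case \eqref{eqn:channel_gain_model} reduces to $\abs{h_{sl}}^2=Gd_{sl}^{-\alpha}$, so the per-phase capacity of the source-relay link is $\C{GP_s d_{sl}^{-\alpha}/\pn}$ and the two-phase rate analogous to \eqref{eqn:R_t_no_VMIMO} is $r_{sl}=\tfrac12\C{GP_s d_{sl}^{-\alpha}/\pn}$; the inequality $r_{sl}\ge r$ rearranges to $d_{sl}\le\rho_s(r)$. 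Hence the set of candidate relays that can decode the source's message at rate $r$ is precisely the disk of radius $\rho_s(r)$ around $s$, whose area is $\Psi_s(r)$.

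Next I would use the bottleneck structure of \eqref{eqn:rate_vmimo_ue}: the overall rate of an assisted source is the minimum of the aggregate link rate $c_s$ and the source-relay rates $\{r_{sl}:l\in\mathcal{A}_s\}$. Any relay with $r_{sl}<R_s$ would drag this minimum below $R_s$ and therefore cannot belong to an $R_s$-achieving cluster. Setting $r=R_s$ in the previous step, every useful assisting \UE~must lie inside the disk centred at $s$ of radius $\rho_s(R_s)=\sqrt{\Psi_s(R_s)/\pi}$, which is the asserted bound.

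The main obstacle is essentially bookkeeping rather than mathematics: keeping the half-rate factor from the two-phase protocol and the $2^{2r}-1$ SNR threshold consistent with the definition of $\Psi_s$, and emphasizing that in the PL-only regime the containment is deterministic (the word ``expected'' in the statement should be read as ``the region where'' rather than a probabilistic expectation, which collapses to a deterministic disk once the shadowing term $V_{lm}$ is absent). Because the proof of Theorem~\ref{thm:average_throughput} already sets up $\Psi_s(r)$ as the decoding-disk area, no extra integration is needed; the corollary follows by reading off the disk at the extremal rate $r=R_s$.
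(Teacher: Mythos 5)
Your proposal is correct and follows essentially the same route as the paper: the paper skips the proof and notes it follows directly from the proof of Theorem~\ref{thm:average_throughput}, where the eligibility condition $\C{\abs{h_{sl}}^2 P_s/\pn}\ge 2r$ is rearranged (in the PL-only case) into the distance threshold $d_{sl}\le d_r$ defining the disk of area $\Psi_s(r)$, and the corollary is exactly this disk read off at $r=R_s$. Your identification of $\rho_s(R_s)=\sqrt{\Psi_s(R_s)/\pi}$ with the paper's $d_{R_s}$, and your use of the bottleneck structure of \eqref{eqn:rate_vmimo_ue} to exclude relays outside it, match the paper's intended argument.
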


For the general case with the log-normal shadowing, i.e.,  $\sigma_{dB} > 0$, an asymptotic upper bound can be obtained. The derivation of this bound is rather involved; we leave the details of this derivation to Appendix \ref{sec:upper_bound_shadowing} and state the results in the following lemma. 

\begin{lem}\label{thm:average_throughput_lns}
For the system described in this section and some positive $\delta$, assume that relays are located within a disk with radius $d_{\max}$ ($d_{\max} > \delta > 0$) centred at the source $s$. Let the probabilities $\pi_k(r, \delta)$ be calculated according to \eqref{eqn:pi_k}. When $\delta \rightarrow 0$, the average spectral efficiency of the source $s$ can be upper bounded as  
\begin{IEEEeqnarray}{rCl}\label{eqn:thm_avg_thp_lns}
\C{\gamma} + \int_{\C{\gamma}}^{\infty}{\Big[\sum_{k \ge k_{r}}  \pi_{k}(r, \delta)  \Big]  \textnormal{d}r     }.
\end{IEEEeqnarray}  
\end{lem}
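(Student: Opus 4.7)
The plan is to parallel the proof of Theorem \ref{thm:average_throughput}, modifying only the step where the count of useful relays is characterized in distribution. I would start from the tail-integral identity $\mathbb{E}[R_s] = \int_0^\infty P(R_s > r)\,\textnormal{d}r$ and split the integral at $r = \C{\gamma}$. Since the direct link alone guarantees rate at least $\C{\gamma}$, the first $\C{\gamma}$ units of the integrand are identically one and contribute the leading term in \eqref{eqn:thm_avg_thp_lns}. For $r > \C{\gamma}$, the cooperative rate can exceed $r$ only when at least $k_r$ relays simultaneously (i) decode the source at rate $r$ and (ii) contribute enough received SNR at the AP, so $P(R_s > r)$ is bounded above by the probability that the number of such \emph{useful} relays is at least $k_r$. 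The threshold $k_r$ of \eqref{eqn:r_neccessary} is inherited from the power-control assumption and is not affected by shadowing.

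Under shadowing, whether a given relay is useful depends not only on its location but also on the independent log-normal multiplier $10^{\sigma_{dB} V/10}$ that scales its squared channel gain, so the crisp achievability disk $\Psi_s(r)$ of Theorem \ref{thm:average_throughput} is replaced by a \emph{random} region. To recover a Poisson-like count I would discretize the $d_{\max}$-disk around the source into concentric annuli of width $\delta$. By the restriction and independent-thinning properties of the Poisson point process, the relays in the $i$-th annulus form an independent Poisson process with rate $\lambda\cdot 2\pi d_i \delta$, and keeping only those whose shadowing pushes their effective gain above the $r$-required threshold again yields an independent Poisson count with mean $\lambda\cdot 2\pi d_i \delta \cdot q_i(r)$, where $q_i(r)$ is the Gaussian tail probability for the discretized distance $d_i$. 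Summing independent Poisson counts across annuli produces a Poisson total whose mean is a Riemann sum over $i$; the probabilities $\pi_k(r,\delta)$ of \eqref{eqn:pi_k} are identified with the mass of this discretized Poisson total at $k$, so that $\sum_{k \geq k_r} \pi_k(r,\delta)$ is exactly the shadowing analogue of $\sum_{k \geq k_r} [\lambda \Psi_s(r)]^k e^{-\lambda \Psi_s(r)}/k!$ appearing in Theorem \ref{thm:average_throughput}.

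Finally, I would let $\delta \to 0$. The Riemann sum for the mean converges to $2\pi\lambda\int_0^{d_{\max}} d\, q(d,r)\,\textnormal{d}d$, where $q(d,r)$ is a $Q$-function in an affine expression of $\log d$ coming from \eqref{eqn:channel_gain_model}. Monotone/dominated convergence then lets me pass the limit through the sum over $k$ and through the outer $r$-integral, yielding \eqref{eqn:thm_avg_thp_lns}. The upper-bound character is preserved because each step is either an exact identity (the tail-integral representation, the restriction/thinning decomposition of the Poisson process) or the same necessary-condition relaxation on the cooperative rate already used in Theorem \ref{thm:average_throughput}.

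The main obstacle I anticipate is the rigorous treatment of shadowing inside each annulus. At finite $\delta$ one is forced to use a single representative distance $d_i$, and therefore a single thinning probability $q_i(r)$, across the whole annulus, which differs from the true distance-dependent threshold; the $\delta \to 0$ limit must be accompanied by an argument that this discretization error vanishes uniformly on the relevant range of $r$ and does not reverse the inequality. A secondary technical point is to confirm that independent thinning applies cleanly: since the shadowing variables $V_{lm}$ are i.i.d.\ and independent of the underlying Poisson point process, the marking theorem justifies treating the shadowing-filtered set of relays as a thinned Poisson process, but this independence should be spelled out so that the sum-of-Poissons step across annuli is rigorous.
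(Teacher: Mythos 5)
Your overall architecture matches the paper's: the proof of Lemma \ref{thm:average_throughput_lns} reuses the skeleton of Theorem \ref{thm:average_throughput} (the tail-integral identity split at $\C{\gamma}$, the perfect-phase-alignment relaxation that yields the threshold $k_r$, and the bound $\Pr\set{R_s>r}\le\Pr\set{|\mathcal{E}_s(r)|\ge k_r}$), and only the law of $|\mathcal{E}_s(r)|$ is recomputed to account for shadowing. Where you diverge is in how that law is obtained. You thin the Poisson process annulus by annulus and superpose, obtaining an exactly Poisson count whose mean is a Riemann sum; the paper instead replaces the Poisson process by a shrinking Bernoulli process (at most one relay per annulus, occupancy probability $2\pi\lambda\delta d_i$), maps each actual-distance bin $j$ into an effective-distance bin $i$ with probability \pji\ expressed through erf, aggregates these into per-bin success probabilities $p_i$, and models $|\mathcal{E}_s(r)|$ as a Poisson \emph{binomial} variable --- \eqref{eqn:pi_k} is precisely the standard recursion for the pmf of that Poisson binomial, not of a Poisson distribution. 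This is where your proposal has a concrete flaw: the claim that $\sum_{k\ge k_r}\pi_k(r,\delta)$ is ``exactly'' the tail of your discretized Poisson total is false at finite $\delta$; the two laws coincide only as $\delta\to 0$ (law of rare events / Le Cam). Since the lemma is stated in terms of the specific quantities in \eqref{eqn:pi_k}, you must either work with the Bernoulli-trial model directly, as the paper does, or insert an explicit convergence step showing that the Poisson-binomial tail built from the $p_i$ converges to your Poisson tail in the limit. Your closing cautions about the marking theorem and the uniformity of the discretization error are well placed --- the paper glosses over both, simply asserting that the shrinking Bernoulli process converges to a Poisson process --- but note also that your thinning probability is indexed by the relay's \emph{actual} annulus while the paper's $p_i$ is indexed by the \emph{effective}-distance bin; both parameterizations produce the same limiting intensity $2\pi\lambda\int\rho\,\Pr\set{\rho e^{-\sigma V}\le d_r}\,\mathrm{d}\rho$, but they are not interchangeable at finite $\delta$.
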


Note that the mean of the shadowing term in \eqref{eqn:channel_gain_model} is greater than one, i.e., $E[10 ^{\sigma_{dB}  V_{lm}/10 }] \ge 1$. In other words, it is expected that the log-normal shadowing increases the connectivity  of the network, consistent with the results shown in \cite{Muetze2008}. This fact results in the bound for the average spectral efficiency in Lemma 1 being greater than that in Theorem \ref{thm:average_throughput}. This difference is illustrated in Fig. \ref{fig:throughput_vs_density}, and as can be seen, the log-normal shadowing increases the average spectral efficiency. It is also illustrated that these bounds are tight for infinite number of precoding weights.

\subsection{An Optimal $\bigO{{N_{\UE}}^2}$ Algorithm} 

For a fixed realization of \UEs, let $s$ be the source, and let the set of available idle \UEs~(relays) be $\mathcal{I} = \set{1,\dots, N_{\UE}} \backslash \set{s}$. Denote as Algorithm \ref{alg:culstering_one_stue}, an algorithm that can perform optimal clustering when $\nw = \infty$. Under this condition,  an optimal precoding can be performed by multiplying the signal $x_l$ by the precoding weight $e^{-j\theta_{ld}}$, thus cancelling the phase rotation cased by the channel.

There are four major steps in Algorithm \ref{alg:culstering_one_stue}:  In the first step, the candidate relays are discovered. Candidate relays are the ones whose rate to the source is greater than twice as much as that of the source to the AP. In the second step, these candidate relays are sorted in a descending ranking based on their link rate to the source. In the third step, these ranked candidate relays are added to the cluster one by one if with  each addition, the spectral efficiency of the source increases. In the last step, it is decided whether the VMIMO system is formed or the source transmits in a one-phase manner.

The precoding step in Algorithm \ref{alg:culstering_one_stue} for scalar channel gains can be performed in $\bigO{N_{\UE}}$ steps. Therefore, the overall complexity of the algorithm would be $\bigO{N_{\UE}^2}$.


\begin{algorithm}[h]
\caption{Clustering for the system described in Sec. \ref{sec:one_stue}}
\label{alg:culstering_one_stue}
\begin{algorithmic}
	\State $r_s \gets \C{\gamma}$;  \Comment{Link rate without VMIMO}
	\State $\mathcal{E}_s \gets \set{l: l\in\mathcal{I}, r_{sl} > 2 r_s}$ \Comment{Candidate relay discovery}
	\State $\mathcal{E}_s^{ \text{sorted} } \gets $ Sort $\mathcal{E}_s$ descending w.r.t. $r_{sj}$, $j \in \mathcal{E}_s$
	\State $\mathcal{A}_s \gets \emptyset$; $r_{\text{old}} \gets 0$
	\For{$j$  \textbf{in} $\mathcal{E}_s^{\text{sorted} }$ }
		\State $ {\mathcal{A}}_s \gets \mathcal{A}_s \cup \set{j}$ 
		\State	$r_{\min} \gets \min \set{r_{sl}: l \in {\mathcal{A}}_s  }$
		\State $p({\mathbf{w}})  \stackrel{\Delta}{=}  \norm{\mathbf{h}_{sd}} ^ 2 \frac{P_s}{\pn}  + \frac{1}{\pn} \norm{  \sum_{ i\in \set{s}\cup \mathcal{A}_s }\mathbf{h}_{id} w_{i} \sqrt{P_i} } ^ 2  $
		\State $\gamma_{\text{AP}} \gets \max _{\mathbf{w}} p({\mathbf{w}})$
		\Comment{Precoding} 
		\State $r_{\text{new}} \gets \frac{1}{2} \min \set {\C{ \gamma_{\text{AP}} } ,\; r_{\min} }$
		\If {$r_{ \text{ new}} > r_{\text{old}}$}
			\State $r_{\text{old}} \gets r_{\text{new}} $
		\Else
			\State $\mathcal{A}_s \gets {\mathcal{A}}_s \backslash \{j\} $
		\EndIf
	\EndFor
	\If {$r_{\text{old}} < r_s$}	
		\State $\mathcal{A}_s \gets \emptyset$  \Comment{No VMIMO}
	\EndIf 
\end{algorithmic}
\end{algorithm}


\begin{prop}\label{prop:best_relay_set}
Algorithm \ref{alg:culstering_one_stue} selects the best subset of \UEs~to cluster with the source, and therefore, on average it  can achieve the bound given in Theorem \ref{thm:average_throughput} for $\nw = \infty$. 
\end{prop}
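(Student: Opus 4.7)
The plan is to prove the claim in two stages: first, that Algorithm~\ref{alg:culstering_one_stue} returns the per-realization maximum of $r_s$ under $\nw=\infty$; second, that the expectation of this maximum equals the bound of Theorem~\ref{thm:average_throughput}.

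For the per-realization step I would first simplify the rate expression. Under $\nw=\infty$ the line $\gamma_{\text{AP}}\gets\max_{\mathbf{w}}p(\mathbf{w})$ admits the unit-modulus choice $w_i=e^{-j\theta_{id}}$; combined with the power-control identity $\abs{h_{id}}^2 P_i=\gamma\pn$, every phase-two summand has amplitude $\sqrt{\gamma\pn}$, so the coherent sum has squared norm $(k+1)^2\gamma\pn$ with $k=\abs{\mathcal{A}_s}$. Adding the phase-one MRC contribution of $\gamma$ gives an effective AP SNR of $\gamma(1+(k+1)^2)$, so the cluster rate becomes
\begin{equation*}
r(\mathcal{A}_s)=\tfrac{1}{2}\min\set{\C{\gamma\bigl(1+(k+1)^2\bigr)},\;\min_{l\in\mathcal{A}_s} r_{sl}}.
\end{equation*}
A standard exchange argument then shows that, for each fixed $k$, the optimum is attained by the $k$ candidate relays with the largest $r_{sl}$: swapping in a relay with higher $r_{sl}$ leaves the AP term unchanged and can only enlarge the bottleneck.

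Having reduced the optimization to the family of top-$k$ selections with $k=0,1,\ldots,\abs{\mathcal{E}_s}$, I would then argue that $r(k)$, the rate obtained by the top-$k$ choice, is unimodal. Its first argument $\tfrac{1}{2}\C{\gamma(1+(k+1)^2)}$ is strictly increasing in $k$, while the second argument $\tfrac{1}{2}r_{s,(k)}$ (the $k$-th order statistic taken in decreasing order) is non-increasing; the minimum of an increasing and a non-increasing function is unimodal. Because Algorithm~\ref{alg:culstering_one_stue} scans candidates in precisely that sorted order and commits a relay iff $r_{\text{new}}>r_{\text{old}}$, it climbs the unimodal curve up to its peak and rejects every subsequent candidate. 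The terminal \emph{if}-clause then replaces the VMIMO choice by the no-VMIMO baseline whenever the peak still falls below $\C{\gamma}$, so the returned rate equals $\max\set{\C{\gamma},\,\max_{k\ge 0}r(k)}$ on every realization.

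For the averaging step I would translate ``$r^*\ge r$'' into a geometric Poisson event. Since $\sigma_{dB}=0$, the path-loss law \eqref{eqn:channel_gain_model} yields $r_{sl}\ge 2r\iff d_{sl}\le\sqrt{\Psi_s(r)/\pi}$, while the unimodality analysis shows that, for $r\ge\C{\gamma}$, $r^*\ge r$ is equivalent to having at least $k_r$ idle \UEs~inside the $r$-achievability disk around $s$. The Poisson assumption then gives exactly the tail sum inside \eqref{eqn:thm_avg_thp}, and the tail identity $E[r^*]=\C{\gamma}+\int_{\C{\gamma}}^{\infty}\Pr(r^*\ge r)\,\textnormal{d}r$---valid because $r^*\ge\C{\gamma}$ almost surely---reproduces the bound exactly, so Algorithm~\ref{alg:culstering_one_stue} attains it.

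The main obstacle I anticipate is rigorously justifying that the greedy stopping rule never leaves the peak of $r(k)$ prematurely: one must be careful with ties in the order statistics $r_{s,(k)}$, with candidates in $\mathcal{E}_s$ whose $r_{sl}$ sits only marginally above $2\C{\gamma}$, and with the degenerate case where $\mathcal{E}_s$ is empty so that only the baseline branch is exercised. Once these corner cases are handled, per-realization optimality is immediate, and the expectation calculation is the same Poisson tail identity that underlies the proof of Theorem~\ref{thm:average_throughput} used in reverse.
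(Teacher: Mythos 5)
Your proof takes essentially the same route as the paper's: the paper's appendix likewise observes that the bottleneck rate $r_{\min}$ is non-increasing and the combined AP SNR $\gamma_{\text{AP}}$ is non-decreasing in $\abs{\mathcal{A}_s}$ when $\nw=\infty$, so their minimum is unimodal along the sorted candidate list and the greedy scan stops exactly at the peak. Your write-up is in fact more complete than the paper's, since you also supply the exchange argument showing the top-$k$ prefix is optimal among all size-$k$ subsets and carry out the expectation/Poisson-tail step connecting per-realization optimality to the bound of Theorem~\ref{thm:average_throughput}, both of which the paper leaves implicit.
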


\begin{proof} 
See Appendix \ref{sec:proof_prop_best_relay_set}. 
\end{proof}

\begin{rem} 
Algorithm \ref{alg:culstering_one_stue} can be considered  to belong to the popular category of  \emph{greedy} algorithms\cite{Cormen2009}, i.e., it chooses the best relay at each step. In fact, all algorithms (including the ones in \cite{Michalopoulos2006, Jing2009, Yu2012, Nam2008}) that sweep through a  sorted list of items  to select the best set of items fall under the category of greedy algorithms. However, there are four major differences between these works and our work in terms of the selection algorithm: firstly, in our work an adaptive number of relays is selected whereas the number of relays is fixed in \cite{Yu2012}; secondly, by considering the effect of the direct transmission, our algorithm chooses no relay when the source can not benefit from relays in terms of achievable rates whereas  in \cite{Jing2009} and \cite{Nam2008}, the assumption is that source has to select at least one relay; thirdly, unlike \cite{Nam2008, Yu2012}, Algorithm \ref{alg:culstering_one_stue} is not required to sort \emph{all} the relays by establishing a necessary  condition for the candidate relays in the discovery step of the algorithm, thus reducing the complexity; lastly, unlike \cite{Michalopoulos2006} and \cite{Jing2009}, our greedy algorithm is optimal. 
\end{rem}

Fig. \ref{fig:throughput_vs_density} compares the average spectral efficiency (bps/Hz) vs. $\lambda$ (\UEs/m$^2$) for different received SNRs and different schemes as described in Table \ref{tbl:lable_fig_one_stue}. For this figure,  Algorithm \ref{alg:culstering_one_stue} is performed over 100 random trials of spatial distribution of \UEs. As can be seen, the spectral efficiency gained by Algorithm \ref{alg:culstering_one_stue} overlaps with the upper bound on the average spectral efficiency given in Theorem \ref{thm:average_throughput} and Lemma \ref{thm:average_throughput_lns} when exact phase matching is performed, i.e., $\nw = \infty$ for graphs labeled (ii) and (iv). This observation further corroborates the optimality of Algorithm \ref{alg:culstering_one_stue} stated in Proposition \ref{prop:best_relay_set}. The precoding for this simulation (in graphs (v)--(vi)) is performed as described in Section \ref{sec:precoding_approximation}. It can be speculated that this approximate precoding method  improves the performance as compared to the case without the precoding ($\nw = 1$). Furthermore, the improvement in the spectral efficiency due to the VMIMO is more significant for poor performing \UEs. For instance, for $\lambda = 0.01$ \UEs/m$^2$, the VMIMO algorithm can boost the spectral efficiency of \UEs~with $\gamma = -10$ dB more than $400\%$ as compared with the baseline spectral efficiency of $\C{\gamma} = 0.14$ bps/Hz  whereas this improvement is about $10\%$ for \UEs~with $\gamma = 10$ dB.

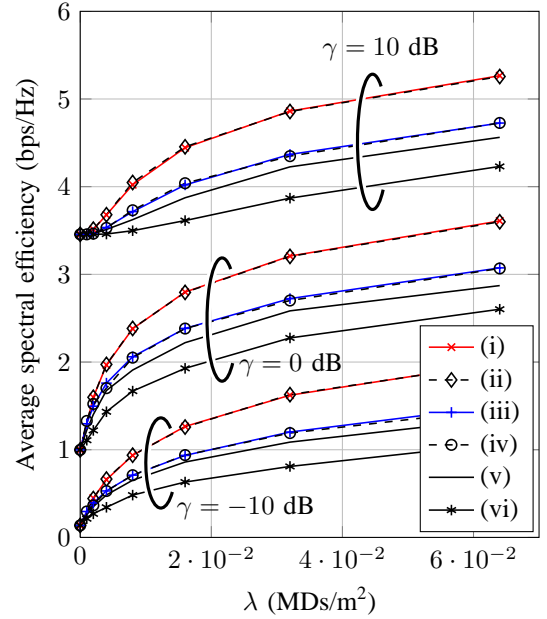
\begin{figure}[t]
\begin{center}	
	\setlength\figureheight{7cm}  \setlength\figurewidth{6.1cm}
	{
%
%
%
%
\begin{tikzpicture}

\begin{axis}[%
width=\figurewidth,
height=\figureheight,
scale only axis,
xmin=0, xmax=0.07,
xlabel={$\lambda\text{ (\UEs/m}^\text{2}\text{)}$},
xmajorgrids,
ymin=0, ymax=6,
ylabel={Average spectral efficiency (bps/Hz)},
ymajorgrids,
legend style={at={(0.74,0.01)},anchor=south west,draw=black,fill=white,align=left, legend cell align=left}]
\addplot [
color=red,
solid,
line width=0.6pt,
mark=x,
mark options={solid}, 
]
coordinates{
 (0,3.45632143459955)(0.002,3.51299640982874)(0.004,3.67804718779362)(0.008,4.02201770881534)(0.016,4.44432238303976)(0.032,4.86180349278019)(0.064,5.26551547614669) 
};
\addlegendentry{(i)};

\addplot [
color=black,
dashed,
line width=0.6pt,
mark=diamond,
mark size=2.7pt,
mark options={solid},
]
coordinates{
 (0,3.45632143459959)(0.002,3.51566686639805)(0.004,3.68016515459089)(0.008,4.04746101803633)(0.016,4.45456285714206)(0.032,4.85728636359096)(0.064,5.25621906619164) 
};
\addlegendentry{(ii)};

\addplot [
color=blue,
solid,
line width=0.6pt,
mark=+,
mark options={solid}
]
coordinates{
 (0,3.45632858383998)(0.001,3.45903884366495)(0.002,3.47275083148219)(0.004,3.53472265803414)(0.008,3.7154303332346)(0.016,4.02014427893831)(0.032,4.36774824758307)(0.064,4.72557573715245) 
};
\addlegendentry{(iii)};

\addplot [
color=black,
dashed,
line width=0.6pt,
mark=o,
mark options={solid}
]
coordinates{
 (0,3.45632143459955)(0.001,3.45632143459955)(0.002,3.46428681140471)(0.004,3.53012492474967)(0.008,3.73097544598021)(0.016,4.03935858142983)(0.032,4.34811495844381)(0.064,4.72678706334344) 
};
\addlegendentry{(iv)};

\addplot [
color=black,
solid,
line width=0.6pt,
]
coordinates{
 (0,3.45632143459955)(0.001,3.45677858045807)(0.002,3.46333801390447)(0.004,3.51215731030475)(0.008,3.62390097802763)(0.016,3.87355908752149)(0.032,4.22590030297882)(0.064,4.56275728056917) 
};
\addlegendentry{(v)};

\addplot [
color=black,
solid,
line width=0.6pt,
mark=asterisk,
mark options={solid}
]
coordinates{
 (0,3.45632143459955)(0.001,3.45632143459955)(0.002,3.45632143459955)(0.004,3.45990950645988)(0.008,3.49806577634385)(0.016,3.61339175955012)(0.032,3.86917415828344)(0.064,4.23092586425521) 
};
\addlegendentry{(vi)};

\addplot [
color=red,
solid,
line width=0.6pt,
mark=x,
mark options={solid},
forget plot
]
coordinates{
 (0,0.998290228580272)(0.002,1.59761615093003)(0.004,1.96651764972158)(0.008,2.38017841144834)(0.016,2.79820517587914)(0.032,3.2086498041308)(0.064,3.60924551156461) 
};
\addplot [
color=black,
dashed,
line width=0.6pt,
mark=diamond,
mark size=2.7pt,
mark options={solid},
forget plot
]
coordinates{
 (0,0.998290228580277)(0.002,1.5980920415554)(0.004,1.97309611854506)(0.008,2.38340897392526)(0.016,2.7943346862905)(0.032,3.20491211525281)(0.064,3.59962358440582) 
};
\addplot [
color=blue,
solid,
line width=0.6pt,
mark=+,
mark options={solid},
forget plot
]
coordinates{
 (0,0.998294156851906)(0.001,1.29625460169678)(0.002,1.49644996360367)(0.004,1.75942869147098)(0.008,2.06004319667567)(0.016,2.3835911785803)(0.032,2.72337345568935)(0.064,3.07393312594937) 
};
\addplot [
color=black,
dashed,
line width=0.6pt,
mark=o,
mark options={solid},
forget plot
]
coordinates{
 (0,0.998290228580273)(0.001,1.32828983140015)(0.002,1.52226417784458)(0.004,1.70233770070272)(0.008,2.05008804829632)(0.016,2.38195025397518)(0.032,2.70164369636571)(0.064,3.06732970020608) 
};

\addplot [
color=black,
solid,
line width=0.6pt,
forget plot
]
coordinates{
 (0,0.998290228580273)(0.001,1.22308877003176)(0.002,1.40186610496208)(0.004,1.67538071247966)(0.008,1.90873114649765)(0.016,2.22082266512958)(0.032,2.58374486418741)(0.064,2.87197390319264) 
};

\addplot [
color=black,
solid,
line width=0.6pt,
mark=asterisk,
mark options={solid},
forget plot
]
coordinates{
 (0,0.998290228580273)(0.001,1.10868040660353)(0.002,1.22049641073308)(0.004,1.43319997265488)(0.008,1.66864846710744)(0.016,1.92823047612496)(0.032,2.27495270425543)(0.064,2.60227924865318) 
};

\addplot [
color=red,
solid,
line width=0.6pt,
mark=x,
mark options={solid},
forget plot
]
coordinates{
 (0,0.137192806896962)(0.002,0.448657518723062)(0.004,0.658446496141548)(0.008,0.936629717040356)(0.016,1.2651742799388)(0.032,1.62194463429098)(0.064,1.99158594191754) 
};
\addplot [
color=black,
dashed,
line width=0.6pt,
mark=diamond,
mark size=2.7pt,
mark options={solid},
forget plot
]
coordinates{
 (0,0.137192806896962)(0.002,0.44318895607392)(0.004,0.665616287196455)(0.008,0.937759437968276)(0.016,1.2585993012417)(0.032,1.62848097735923)(0.064,1.98567217629974) 
};
\addplot [
color=blue,
solid,
line width=0.6pt,
mark=+,
mark options={solid},
forget plot
]
coordinates{
 (0,0.137193520436673)(0.001,0.295549911688937)(0.002,0.393300499959327)(0.004,0.530103623357532)(0.008,0.709957251853082)(0.016,0.935252481011385)(0.032,1.20182765961343)(0.064,1.50018070015414) 
};
\addplot [
color=black,
dashed,
line width=0.6pt,
mark=o,
mark options={solid},
forget plot
]
coordinates{
 (0,0.137192806896962)(0.001,0.295864541288705)(0.002,0.365363163131375)(0.004,0.529304302073799)(0.008,0.709556435181779)(0.016,0.93888254986573)(0.032,1.19019189840174)(0.064,1.46286620992847) 
};
\addplot [
color=black,
solid,
line width=0.6pt,
forget plot
]
coordinates{
 (0,0.137192806896962)(0.001,0.257865298973054)(0.002,0.351242967043475)(0.004,0.484295923773593)(0.008,0.652475354832825)(0.016,0.859646820791509)(0.032,1.08685508389605)(0.064,1.35500492239967) 
};

\addplot [
color=black,
solid,
line width=0.6pt,
mark=asterisk,
mark options={solid},
forget plot
]
coordinates{
 (0,0.137192806896962)(0.001,0.213431835685917)(0.002,0.269217747044534)(0.004,0.34373839393296)(0.008,0.482590849453234)(0.016,0.631575082802693)(0.032,0.809546567625871)(0.064,1.07090347997019) 
};

\end{axis}

	\draw[line width=3.5pt, draw=white] (1.2,1.45) arc  (50:320:0.2cm and 0.6cm) ;
	\draw[very thick] (1.2,1.45) arc  (50:320:0.2cm and 0.6cm) ; 
	\node [] at(2.2,0.4) {$\gamma = -10$ dB}; 
	
	\draw[line width=3.5pt, draw=white] (2,3.57) arc  (55:315:0.2cm and 0.82cm) ; 
	\draw[very thick] (2,3.57) arc  (55:315:0.2cm and 0.82cm) ; 
	\node [] at(2.8,2.3) {$\gamma = 0$ dB}; 
	
	\draw[line width=3.5pt, draw=white] (4,6) arc  (55:315:0.2cm and 0.9cm) ; 
	\draw[very thick] (4,6) arc  (55:315:0.2cm and 0.9cm) ; 
	\node [] at(4,6.5) {$\gamma = 10$ dB}; 
	
\end{tikzpicture}%
	}
	
%
%
	
	\caption{Average spectral efficiency vs. density of the \UEs~for various received SNRs. Curves are grouped  for each received SNR $\gamma$, and the labels in the legend are explained in Table \ref{tbl:lable_fig_one_stue}. Note that $\lambda = 0$ is treated as a special case in which there is only one source and no other \UEs. $d_{\max} = 25m$ for (i) and (ii), and $\delta = 0.05$ for (i).
	 \label{fig:throughput_vs_density}}
\end{center}
\end{figure}

\begin{table}[t]
	\begin{center}
		\caption{Description  of the legends in Fig. \ref{fig:throughput_vs_density} } \label{tbl:lable_fig_one_stue}
		\begin{tabular}{c|c|c|c}
			Label & Method & $\sigma_{dB}$  & \nw \\ 
			\hline
			(i) & Lemma \ref{thm:average_throughput_lns} & $ 8$ & N/A \\
			(ii) & Algorithm \ref{alg:culstering_one_stue} &  $ 8$ & $\infty$\\
			(iii) & Theorem \ref{thm:average_throughput} & 0 & N/A\\
			(iv) & Algorithm \ref{alg:culstering_one_stue} & $0$& $\infty$\\
			(v) & Algorithm \ref{alg:culstering_one_stue} & $0$ &  $8$\\
			(vi) & Algorithm \ref{alg:culstering_one_stue} & $0$ & $1$ 
		\end{tabular}
	\end{center}
\end{table}

\section{Clustering for Multiple Sources} \label{sec:clustering_alg_multi_source}

In this section, we extend the algorithm developed in Section \ref{sec:one_stue} to the general case with multiple sources and develop Algorithm \ref{alg:culstering_multi_stue}. Algorithm \ref{alg:culstering_one_stue} is used in Algorithm \ref{alg:culstering_multi_stue} as a sub-algorithm with the following  modifications: firstly, $\C{\cdot}$ is obtained by the general formula given in \eqref{eqn:R_t_no_VMIMO} and \eqref{eqn:aggregate_rate_assisted_ue}; secondly, the precoding is performed according to Section \ref{sec:precoding_approximation}. 

Unlike Algorithm \ref{alg:culstering_one_stue}, we were unable to prove an optimality bound for Algorithm \ref{alg:culstering_multi_stue}. However, through extensive simulations, we later illustrate 
that Algorithm \ref{alg:culstering_multi_stue} exhibits a near optimal performance when compared to that of an exhaustive clustering algorithm as far as the harmonic mean utility metric is concerned. 

\begin{algorithm}[h]
\caption{Clustering algorithm for multiple sources}
\label{alg:culstering_multi_stue}
\begin{algorithmic}
	\State $\mathcal{M} \gets \set{1, \cdots, M}$ 
	\State $\mathcal{I} \gets \set{M+1, \cdots, N_{\UE}} $ \Comment{Available idle \UEs}
	\State $\mathcal{M}^{\text{sorted}} \gets$ Sort $\mathcal{M}$ ascending w.r.t. $c_s$, $s \in \mathcal{M}$
	\For {$s$ \textbf{in} $\mathcal{M}^{\text{sorted}}$}
		\State Perform Algorithm \ref{alg:culstering_one_stue} for \UE~$s$ and idle \UE~set $\mathcal{I}$
		\State $\mathcal{I} \gets \mathcal{I} \backslash \mathcal{A}_s $
	\EndFor
\end{algorithmic}
\end{algorithm}

To motivate why a simple source sorting results in a near optimal performance, it should be noted that 
one important property of the harmonic mean is that it is limited by the smallest term\footnote{
To clarify, consider the following inequality that holds for positive numbers $r_1$, $r_2$, $\cdots$, $r_M$: 
$
{M}/({\sum_{i=1}^M r_i^{-1}}) \le M \min_{1 \le i \le M}{r_i}
$.
}. This property implies that an increase in the smallest term can potentially  lead to a significant increase in the harmonic mean. Therefore, in Algorithm \ref{alg:culstering_multi_stue}, \emph{underprivileged} sources are served first so as to enable them to benefit from the most available resources.


From an algorithmic perspective, we can distinguish Algorithm \ref{alg:culstering_multi_stue} from previous works on DF relay selection for multiple sources regarding three aspects. Firstly, our algorithm selects an adaptive number of relays for each source while in \cite{Beres2008} and \cite{Zhou2013}, one relay is selected for each source. As we argued, enforcing the source to select relays is not always optimal in terms of the spectral efficiency. 
Secondly, even though Algorithm \ref{alg:culstering_multi_stue} is oblivious to the leakage interference, the users are assumed to share the same channel. To isolate the relay selection for each user, however, in \cite{Beres2008} and \cite{Yu2012a}, it is assumed that sources utilize orthogonal channels. 
Thirdly, in \cite{Yu2012a, Zhou2013, Li2010}, conflicts in selection are resolved by deploying some variation of the \emph{message passing} procedure \cite{Frey2007} (also known as auction rounds). For instance, \cite{Yu2012a} and \cite{Zhou2013} require up to $M$ and $M-1$ rounds of iteration, respectively. This number can be even larger in \cite{Li2010} depending on system parameters. 
In other words, the number of iteration rounds scale with the number of sources, and therefore, these algorithms may not be implementable in a large scale scenario where the channels are dynamically and rapidly changing.
 This problem is addressed in our work by prioritizing sources and requiring only one round of iteration.  
Consequently, we show that a \emph{simple} and \emph{agnostic} algorithm can perform nearly optimal when the harmonic mean utility metric is considered, and it may not be necessary to resort to a complex and time taking solution.

\section{Numerical Results} \label{sec:numerical_results}

\subsection{Simulation Setup and Parameters}
In our simulations, we use the 3GPP channel model for the indoor environment \cite[Sec. A.2]{3gpp}. According to this model the PL (in dB) for $2$ GHz carrier frequency is given by 
$103.4 + 24.2 \log(d)$ 
where $d$ is the distance in km. The slow fading is modelled by the log-normal shadowing with dB-spread $\sigma_{dB} = 8$. Antennas are assumed to be omnidirectional with $0$ dB antenna gain. The fast fading is assumed to be Rayleigh fading modelled as a Gaussian complex random variable with variance $1/2$ per real dimension.  
The noise power is $\pn = -101 $ dBm for $20$ MHz bandwidth. The maximum transmit power of \UEs~is limited to $P_{\max} = 20$ dBm. \UEs~are assumed to be power controlled, and the power control algorithm is assumed to average out the effect of Rayleigh fading, i.e., its decisions are only based on the PL and shadowing. As a result of this power control, each \UE's power is adjusted such that the received power at the serving AP is $-80$ dBm excluding the interference. The number of receive antennas at the AP is $N_{rx} = 4$. The field size is 100m$\times$100m. We assume that there are five APs, and therefore, at each instance, there are five sources. Furthermore, to simulate the environments with interference, we assume there is an aggressor network with average density of $10^{-3}$ sources per square meter with a similar power control algorithm. The clustering algorithm has no control over this aggressor network. 

For the sake of a better visualization in our simulation results, we use the effective SINR (in dB) instead of the spectral efficiency. In other words, if $r_k$ is the spectral efficiency of the source $k$, the corresponding  effective SINR, SINR$_{\text{eff}}$, is calculated as 
\begin{IEEEeqnarray}{rCl}\label{eqn:sinr_eff}
\text{SINR}_{\text{eff}} = 10 \log_{10}\left( 2^{r_k} - 1 \right).
\end{IEEEeqnarray}
For each $\lambda$ or $\nw$, the reported simulation results are  averaged over 1000 trials. These trials encompass  the random spatial distribution of \UEs~and APs, random log-normal shadowing, random Rayleigh fading, and random scheduling.

\subsection{Discussion}

The performance of Algorithm \ref{alg:culstering_multi_stue} is illustrated in Figs. \ref{fig:SINR_imp_vs_base_SINR}, \ref{fig:num_relays_vs_base_SINR}, \ref{fig:SINR_imp_CDF}, Table \ref{tbl:harmonic_mean_imp}, and Table \ref{tbl:epb_increase}. In these illustrations, $\lambda$ represents the density of the network (\UEs/m$^2$), and $\nw$ represents the number of precoding choices (there is no precoding when $\nw = 1$). For low densities of \UEs, these performances are compared to that of an exhaustive clustering. However, since the run time of the exhaustive clustering grows exponentially with the density of \UEs,  for  high density networks, it is not computationally feasible to run the exhaustive algorithm. 

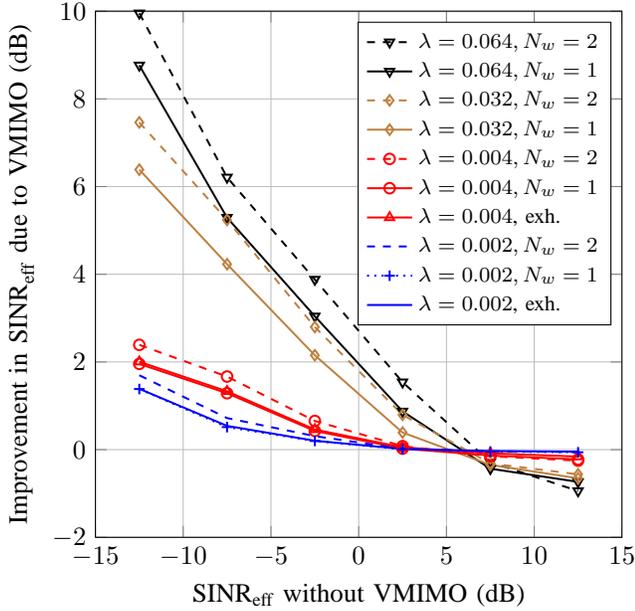
\begin{figure}[t]
\begin{center}	
	\setlength\figureheight{7cm} \setlength\figurewidth{7cm}
%
%
%
%
\begin{tikzpicture}[scale = \figscale]

\begin{axis}[%
width=\figurewidth,
height=\figureheight,
scale only axis,
xmin=-15, xmax=15,
xlabel={$\text{SINR}_{\text{eff}}\text{ without VMIMO (dB)}$},
xmajorgrids,
ymin=-2, ymax=10,
ylabel={$\text{Improvement in SINR}_{\text{eff}}\text{ due to VMIMO (dB)}$},
ymajorgrids,
legend style={draw=black,fill=white,align=left, legend cell align=left}
 ]

\addplot [
color=black, 
dashed,
line width=0.75pt,
mark=triangle,
mark options={solid,,rotate=180}
]
coordinates{
 (-12.5,9.95392694468092)(-7.5,6.21096777387585)(-2.5,3.88083712043916)(2.5,1.53909697601485)(7.5,-0.285384029996687)(12.5,-0.942378674346861) 
};
\addlegendentry{{\footnotesize $\lambda= 0.064, \nw = 2$}};

\addplot [ 
color=black,
solid,
line width=0.75pt,
mark=triangle,
mark options={solid,,rotate=180}
]
coordinates{
 (-12.5,8.76613571727882)(-7.5,5.29136508457909)(-2.5,3.0420437709835)(2.5,0.873147726036997)(7.5,-0.434377717864089)(12.5,-0.73433737459155) 
};
\addlegendentry{{\footnotesize $\lambda= 0.064, \nw = 1$}};

\addplot [
color=brown, 
dashed,
line width=0.75pt,
mark=diamond,
mark options={solid}
]
coordinates{
 (-12.5,7.46407588523477)(-7.5,5.23962619120684)(-2.5,2.79703885222174)(2.5,0.793282733068538)(7.5,-0.314289095161701)(12.5,-0.563403829085464) 
};
\addlegendentry{{\footnotesize $\lambda= 0.032, \nw = 2$}};

\addplot [
color=brown,
solid,
line width=0.75pt,
mark=diamond,
mark options={solid}
]
coordinates{
 (-12.5,6.38673842803755)(-7.5,4.22973163467175)(-2.5,2.15278149920996)(2.5,0.390163738022414)(7.5,-0.347563275373387)(12.5,-0.653631729571621) 
};
\addlegendentry{{\footnotesize $\lambda= 0.032, \nw =1$}};

\addplot [
color=red, 
dashed,
line width=0.75pt,
mark=o,
mark options={solid}
]
coordinates{
 (-12.5,2.38994414758407)(-7.5,1.66717123254244)(-2.5,0.654623334886874)(2.5,0.0794678324778697)(7.5,-0.15030964504086)(12.5,-0.250877725917067) 
};
\addlegendentry{{\footnotesize $\lambda= 0.004, \nw = 2$}};

\addplot [
color=red,
solid,
line width=0.75pt,
mark=o,
mark options={solid}
]
coordinates{
 (-12.5,1.95627651535519)(-7.5,1.28746911604293)(-2.5,0.42699026684962)(2.5,0.0238555336503212)(7.5,-0.133102841552482)(12.5,-0.222199762252439) 
};
\addlegendentry{{\footnotesize $\lambda= 0.004, \nw = 1$}};

\addplot [
color=red, 
solid,
line width=0.75pt,
mark=triangle,
mark options={solid}
]
coordinates{
 (-12.5,2.00461141185634)(-7.5,1.33040988576604)(-2.5,0.459553859482554)(2.5,0.0529452866615073)(7.5,-0.0830218792329006)(12.5,-0.15860472390412) 
};
\addlegendentry{{\footnotesize $\lambda= 0.004$, exh.}};

\addplot [
color=blue, 
dashed,
line width=0.75pt,
mark options={solid}
]
coordinates{
 (-12.5,1.69658518531543)(-7.5,0.72024563572724)(-2.5,0.308387722296323)(2.5,0.017601744058569)(7.5,-0.0360771283446286)(12.5,-0.0552208515620954) 
};
\addlegendentry{{\footnotesize $\lambda= 0.002, \nw = 2$}};

\addplot [
color=blue,
dotted,
line width=0.75pt,
mark=+, 
mark options={solid}
]
coordinates{
 (-12.5,1.38270521670568)(-7.5,0.514745753101735)(-2.5,0.198004687533624)(2.5,0.00936527550668523)(7.5,-0.0438566665456568)(12.5,-0.0598814421555179) 
};
\addlegendentry{{\footnotesize $\lambda= 0.002, \nw = 1$}};

\addplot [
color=blue,
solid,
line width=0.75pt,
mark options={solid}
]
coordinates{
 (-12.5,1.38754708378738)(-7.5,0.542103760697368)(-2.5,0.20308709687808)(2.5,0.015774268896006)(7.5,-0.0318930716533728)(12.5,-0.0454999853896487) 
};
\addlegendentry{{\footnotesize $\lambda= 0.002$, exh.}};

\end{axis}
\end{tikzpicture}%
	\caption{The improvement of the SINR$_{\text{eff}}$ after the formation of VMIMO using Algorithm \ref{alg:culstering_multi_stue} vs. the baseline SINR$_{\text{eff}}$ of sources without VMIMO. $\lambda$ represents the \UE~density (\UEs/m$^2$), and $\nw$ represents the number of precoding choices (there is no precoding for $\nw = 1$).
	 \label{fig:SINR_imp_vs_base_SINR}}
\end{center}
\end{figure}

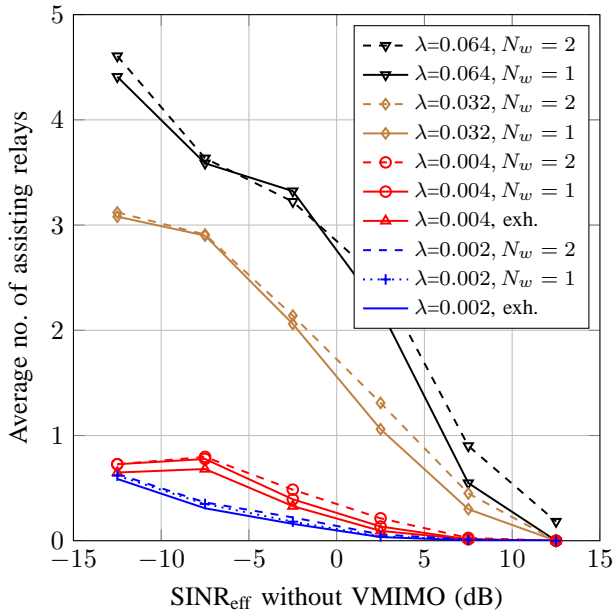
\begin{figure}[t]
\begin{center}	
	\setlength\figureheight{7cm} \setlength\figurewidth{7cm}
%
%
%
%

\definecolor{mycolor1}{rgb}{0.847058832645416,0.160784319043159,0}

\begin{tikzpicture} [scale=\figscale]

\begin{axis}[%
width=\figurewidth,
height=\figureheight,
scale only axis,
xmin=-15, xmax=15,
xlabel={$\text{SINR}_{\text{eff}}\text{ without VMIMO (dB)}$},
xmajorgrids,
ymin=0, ymax=5,
ylabel={Average no. of assisting relays},
ymajorgrids,
legend style={draw=black,fill=white,align=left, legend cell align=left}]

\addplot [
color=black, 
dashed,
line width=0.75pt,
mark=triangle,
mark options={solid,,rotate=180}
]
coordinates{
 (-12.5,4.60655737704918)(-7.5,3.63544303797468)(-2.5,3.22249388753056)(2.5,2.47356948228883)(7.5,0.901382488479263)(12.5,0.182370820668693) 
};
\addlegendentry{{\footnotesize $\lambda\text= 0.064,\nw= 2$}};

\addplot [
color=black,
solid,
line width=0.75pt,
mark=triangle,
mark options={solid,,rotate=180}
]
coordinates{
 (-12.5,4.40983606557377)(-7.5,3.5873417721519)(-2.5,3.32273838630807)(2.5,2.19618528610354)(7.5,0.548387096774194)(12.5,0) 
};
\addlegendentry{{\footnotesize $\lambda\text= 0.064,\nw= 1$}};

\addplot [
color= brown, 
dashed,
line width=0.75pt,
mark=diamond,
mark options={solid}
]
coordinates{
 (-12.5,3.12 )(-7.5,2.91 )(-2.5,2.14 )(2.5,1.31 )(7.5,0.45 )(12.5,0 ) 
};
\addlegendentry{{\footnotesize $\lambda\text= 0.032,\nw= 2$}};

\addplot [
color=brown, 
solid,
line width=0.75pt,
mark=diamond,
mark options={solid}
]
coordinates{
 (-12.5,3.08 )(-7.5,2.9 )(-2.5,2.06 )(2.5,1.06 )(7.5,0.3 )(12.5,0) 
};
\addlegendentry{{\footnotesize $\lambda\text= 0.032,\nw= 1$}};

\addplot [
color=red, 
dashed,
line width=0.75pt,
mark=o,
mark options={solid}
]
coordinates{
 (-12.5,0.727272727272727)(-7.5,0.796762589928058)(-2.5,0.485496183206107)(2.5,0.213224368499257)(7.5,0.0269360269360269)(12.5,0) 
};
\addlegendentry{{\footnotesize $\lambda\text= 0.004,\nw= 2$}};

\addplot [
color=red,
solid,
line width=0.75pt,
mark=o,
mark options={solid}
]
coordinates{
 (-12.5,0.727272727272727)(-7.5,0.776978417266187)(-2.5,0.393893129770992)(2.5,0.135215453194651)(7.5,0.0134680134680135)(12.5,0) 
};
\addlegendentry{{\footnotesize $\lambda\text= 0.004,\nw= 1$}};

\addplot [
color=red, 
solid,
line width=0.75pt,
mark=triangle,
mark options={solid}
]
coordinates{
 (-12.5,0.648484848484848)(-7.5,0.681654676258993)(-2.5,0.329007633587786)(2.5,0.0936106983655275)(7.5,0.0134680134680135)(12.5,0) 
};
\addlegendentry{{\footnotesize $\lambda\text= 0.004$, exh.}};

\addplot [
color=blue, 
dashed,
line width=0.75pt,
mark options={solid}
]
coordinates{
 (-12.5,0.634920634920635)(-7.5,0.364312267657993)(-2.5,0.221742881794651)(2.5,0.0596658711217184)(7.5,0.011611030478955)(12.5,0) 
};
\addlegendentry{{\footnotesize $\lambda\text= 0.002,\nw= 2$}};

\addplot [
color=blue,
dotted,
line width=0.75pt,
mark=+, 
mark options={solid}
]
coordinates{
 (-12.5,0.619047619047619)(-7.5,0.349442379182156)(-2.5,0.181190681622088)(2.5,0.0421638822593477)(7.5,0.00870827285921625)(12.5,0) 
};
\addlegendentry{{\footnotesize $\lambda\text= 0.002,\nw= 1$}};

\addplot [
color=blue,
solid,
line width=0.75pt,
mark options={solid}
]
coordinates{
 (-12.5,0.587301587301587)(-7.5,0.308550185873606)(-2.5,0.160483175150992)(2.5,0.0350039777247414)(7.5,0.00435413642960813)(12.5,0) 
};
\addlegendentry{{\footnotesize $\lambda\text= 0.002$, exh.}};

\end{axis}
\end{tikzpicture}%
	\caption{Average number of assisting \UEs~(relays) using  Algorithm \ref{alg:culstering_multi_stue} vs. the baseline SINR$_{\text{eff}}$ of sources without VMIMO. $\lambda$ represents the \UE~density (\UEs/m$^2$), and \nw represents the number of precoding choices (there is no precoding for $\nw = 1$). 
	 \label{fig:num_relays_vs_base_SINR}}
\end{center}
\end{figure}

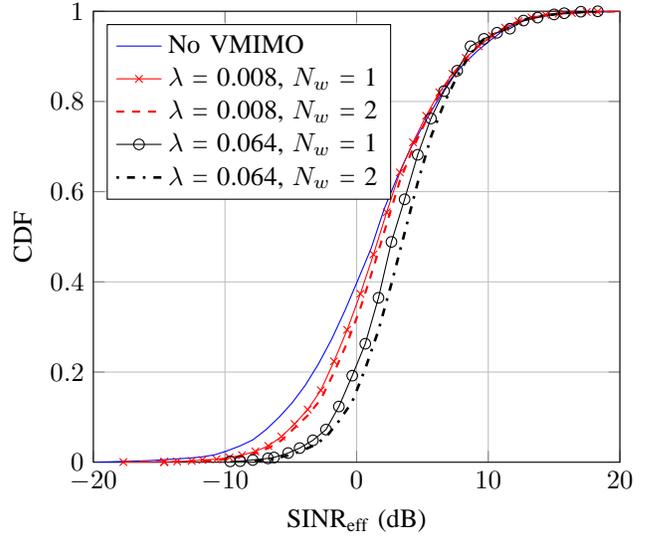
\begin{figure}[t]
\begin{center}	
	\setlength\figureheight{6cm} \setlength\figurewidth{7cm}
	{
%
%
%
%
\begin{tikzpicture} [scale=\figscale]

\begin{axis}[%
width=\figurewidth,
height=\figureheight,
scale only axis,
xmin=-20, xmax=20,
xlabel={$\text{SINR}_{\text{eff}}$ (dB)},
xmajorgrids,
ymin=0, ymax=1,
ylabel={CDF},
ymajorgrids,
legend style={at={(0.0253298317042467,0.582103823915785)},anchor=south west,draw=black,fill=white,align=left, legend cell align=left}]

\addplot [
color=blue, 
solid
]
coordinates{
 (-19.6790813564043,0.000414078674948248)(-19.6790813564043,0.000414078674948248)(-19.6790813564043,0.000414078674948248)(-19.6790813564043,0.000414078674948248)(-19.6790813564043,0.000414078674948248)(-19.6790813564043,0.000414078674948248)(-18.7098124901681,0.00103519668737062)(-17.7587999674524,0.00186335403726712)(-16.4618275027091,0.00289855072463763)(-15.5582522840298,0.00393374741200825)(-14.8762070990048,0.00517598343685299)(-13.7692747629723,0.00724637681159424)(-12.8818992905654,0.00890269151138723)(-11.8693798912473,0.0113871635610767)(-10.7744947492817,0.0165631469979298)(-9.89661570257961,0.0246376811594206)(-8.85371170869997,0.0362318840579713)(-7.87881581795437,0.0496894409937891)(-6.90145347095962,0.0718426501035204)(-5.90320071387266,0.0993788819875789)(-4.9001088684313,0.131677018633542)(-3.90139084851273,0.170600414078678)(-2.89994745126509,0.218012422360252)(-1.90345585925901,0.272463768115947)(-0.901895671867786,0.33540372670808)(0.0959873345204055,0.403933747412016)(1.09919052060203,0.47391304347827)(2.09708194970935,0.56190476190477)(3.09600769338047,0.632712215320919)(4.09656656649547,0.695859213250527)(5.09931544174,0.749482401656325)(6.10529501496741,0.800000000000009)(7.09583117440881,0.84741200828158)(8.09817512666348,0.881780538302283)(9.1070951313014,0.91035196687371)(10.1005088436213,0.934368530020707)(11.0998867873856,0.956521739130436)(12.1105379513326,0.97329192546584)(13.2058454751703,0.983022774327123)(14.1078186357618,0.990062111801243)(15.1010578717897,0.993995859213251)(16.1676955397333,0.994824016563147)(17.1223752754401,0.997101449275362)(18.109188095327,0.997929606625259)(19.6021182009147,0.999171842650104)(20.5047941179591,0.999585921325052) 
};
\addlegendentry{No VMIMO};

\addplot [
color=red, 
solid,
mark=x,
mark options={solid}
]
coordinates{
 (-17.7187161452914,0.000414078674948248)(-14.6071604450445,0.000621118012422373)(-14.6071604450445,0.000621118012422373)(-14.6071604450445,0.000621118012422373)(-13.630045032456,0.00165631469979299)(-12.533140763881,0.00289855072463763)(-11.7129911567032,0.00517598343685299)(-10.6431700682687,0.00662525879917186)(-9.66277717155507,0.0105590062111802)(-8.70426410131065,0.0157349896480333)(-7.69309074232756,0.0240165631469982)(-6.71139735954116,0.0362318840579713)(-5.70589240967533,0.0569358178053835)(-4.71591063186889,0.0848861283643902)(-3.71501544329312,0.116770186335405)(-2.71894926594582,0.159627329192549)(-1.71867363058609,0.223809523809528)(-0.718103335126657,0.293581780538307)(0.2852023114544,0.373498964803319)(1.28166928367629,0.4608695652174)(2.28119884241485,0.554244306418228)(3.28143766590388,0.642650103519677)(4.28507783039794,0.709523809523819)(5.28275430191005,0.768530020703944)(6.28131540788031,0.820082815734997)(7.28875045124386,0.861283643892346)(8.2975653672221,0.896687370600418)(9.28239261856956,0.922360248447208)(10.2844303450546,0.945548654244309)(11.2827603836511,0.96376811594203)(12.2945455070892,0.978053830227744)(13.2829853315553,0.986542443064183)(14.3566638830258,0.992132505175984)(15.3451918911086,0.994409937888199)(16.3545741035007,0.996066252587992)(17.4792721651058,0.998343685300207)(18.3015873621218,0.999171842650104)(20.4826443245406,0.999792960662526) 
};
\addlegendentry{$\lambda\text{ = 0.008,\nw = 1}$};

\addplot [
color=red, 
dashed,
line width=0.8pt
]
coordinates{
 (-17.6683275297894,0.000414078674948248)(-14.462616948068,0.000621118012422373)(-14.462616948068,0.000621118012422373)(-14.462616948068,0.000621118012422373)(-13.5526386733733,0.00103519668737062)(-12.5659768912788,0.00207039337474124)(-11.4398913695596,0.00310559006211175)(-10.5531754249349,0.00414078674948237)(-9.67369161625564,0.00786749482401661)(-8.70426410131065,0.0128364389233956)(-7.60441265012564,0.0204968944099382)(-6.71316460801245,0.0312629399585924)(-5.70736476339244,0.0486542443064185)(-4.71723783504476,0.0759834368530029)(-3.70582284498445,0.10393374741201)(-2.71894926594582,0.139751552795033)(-1.71413922452112,0.199171842650107)(-0.718103335126657,0.265010351966878)(0.284102175573067,0.34140786749483)(1.28166928367629,0.431884057971022)(2.28178753809077,0.528571428571437)(3.28143766590388,0.623809523809532)(4.28461163991726,0.69648033126295)(5.28446483386606,0.758178053830238)(6.28131540788031,0.816977225672886)(7.28875045124386,0.860041407867501)(8.28678926827715,0.894202898550729)(9.28343250711874,0.924016563147001)(10.2850480072194,0.944927536231886)(11.2827603836511,0.964596273291927)(12.2917500607511,0.97784679089027)(13.2829853315553,0.986542443064183)(14.3566638830258,0.992132505175984)(15.3451918911086,0.994616977225673)(16.2922324970476,0.996687370600414)(17.377809913134,0.997929606625259)(18.2820987294735,0.998964803312629)(19.7420856852604,0.999585921325052)(21.0170797168994,0.999792960662526) 
};
\addlegendentry{$\lambda\text{ = 0.008,\nw = 2}$};

\addplot [
color=black,
solid,
mark=o,
mark options={solid}
]
coordinates{
 (-9.60665525578214,0.002)(-8.85373111450471,0.00240000000000007)(-7.8910055119256,0.00500000000000012)(-6.73536589727406,0.00860000000000005)(-6.27541179234405,0.0106000000000002)(-5.20362109770015,0.0204000000000004)(-4.32746598174164,0.0314000000000008)(-3.3230772998542,0.0486000000000008)(-2.32681649982521,0.0724000000000001)(-1.34738173818529,0.1236)(-0.331627745095437,0.192)(0.659054134628655,0.2628)(1.66124874035564,0.3646)(2.64574647481553,0.488600000000001)(3.64454754409295,0.583000000000001)(4.64330076219963,0.681800000000001)(5.64449109044171,0.762200000000001)(6.64400755380133,0.823000000000001)(7.64126803008454,0.868200000000001)(8.65452510903526,0.9224)(9.6694270532354,0.9388)(10.6679113540902,0.9524)(11.6438895248536,0.9614)(12.6790940251988,0.9794)(13.743872950647,0.986)(15.0002246353473,0.993)(15.7668718918578,0.9956)(17.0397160248816,0.999)(18.3154639419551,1) 
};
\addlegendentry{$\lambda\text{ = 0.064,\nw = 1}$};

\addplot [
color=black,
dash pattern=on 1pt off 3pt on 3pt off 3pt,
line width=1.0pt
]
coordinates{
 (-8.55585267478095,0.00139999999999996)(-8.55585267478095,0.00139999999999996)(-7.87643366231423,0.00299999999999989)(-6.77408015015315,0.00459999999999983)(-6.34780427571227,0.00759999999999983)(-5.3060244172307,0.0143999999999999)(-4.34078894168139,0.0231999999999998)(-3.30104118094701,0.0365999999999997)(-2.34351702863101,0.0551999999999994)(-1.35632572574433,0.0909999999999994)(-0.333530903262936,0.134)(0.641692058373366,0.2062)(1.64274771266709,0.291000000000001)(2.64161371124821,0.395400000000001)(3.67504287543364,0.512)(4.64725864902734,0.6208)(5.68001422927066,0.7244)(6.64486308560882,0.795800000000001)(7.64524485714052,0.8566)(8.64637122069513,0.906)(9.71035536816049,0.939)(10.657276512389,0.954)(11.769291884899,0.965)(12.6780602750015,0.9814)(13.947669865172,0.989)(15.0002246353473,0.993)(15.7668718918578,0.9966)(17.3231105545621,0.999)(18.3014374957463,1) 
};
\addlegendentry{$\lambda\text{ = 0.064,\nw = 2}$};

\end{axis}
\end{tikzpicture}%
		}
	\caption{Cumulative distribution function of the SINR$_{\text{eff}}$ for different densities of \UEs~and precoding resolutions.   $\lambda$ represents the \UE~density (\UEs/m$^2$), and $\nw$ represents the number of precoding choices (there is no precoding for $\nw = 1$). \label{fig:SINR_imp_CDF}}
\end{center}
\end{figure}


Fig. \ref{fig:SINR_imp_vs_base_SINR} shows the improvement in the SINR$_{\text{eff}}$ of the sources after the formation of the VMIMO system vs. the baseline SINR$_\text{eff}$ of sources without the VMIMO. 
As can be seen, the spectral efficiency of \emph{poor} performing sources is significantly increased whereas the spectral efficiency of the \emph{strong} sources with high SINR$_{\text{eff}}$ is slightly degraded. For instance, when there is one \UE~in every 15 m$^2$ on average ($\lambda = 0.064$), sources with SINR$_{\text{eff}} = -10$ dB can gain more than $7$ dB improvement in their SINR$_{\text{eff}}$ while sources with SINR$_{\text{eff}} = 10$ dB  suffer less than $1$ dB. This compromise, nevertheless, leads to an overall  improvement (cf. Table \ref{tbl:harmonic_mean_imp}). 

Fig. \ref{fig:num_relays_vs_base_SINR} illustrates the average number of assisting \UEs~(relays) due to the VMIMO vs. the baseline SINR$_{\text{eff}}$ of sources without the VMIMO. 
As can be seen, the number of assisting \UEs~tend to increase when the precoding is performed because  the aggregate uplink rate in \eqref{eqn:aggregate_rate_assisted_ue} is increased due to this precoding. Hence, considering \eqref{eqn:rate_vmimo_ue}, the number of assisting relays can be increased.    

Fig. \ref{fig:SINR_imp_CDF} demonstrates the CDF of the SINR$_{\text{eff}}$ for different densities of \UEs~and precoding resolutions. As can be seen, the shift of the SINR$_{\text{eff}}$ towards higher values on the right, increases as the density of \UEs~increases. Also, the results clearly illustrate gains of up to 7 dB for the worst 5-10\% of users and a gain of 3 dB on average.


The effect of the precoding codebook size ($\nw$) is illustrated in Tables \ref{tbl:harmonic_mean_imp} and \ref{tbl:epb_increase}. In our simulation, before implementing the VMIMO,  the overall harmonic mean of the spectral efficiency is $0.83$ (bps/Hz), and the $\overline{E}_b$ is $69.5 \mu$J/b. As can be seen, one bit of feedback is enough to provide the majority of the gains. Also, juxtaposing Table \ref{tbl:harmonic_mean_imp} and Table \ref{tbl:epb_increase} reveals compelling  results: firstly, by applying Algorithm \ref{alg:culstering_multi_stue}, the spectral efficiency and  energy efficiency are both improved for low densities of \UEs; 
secondly, for high densities of \UE s,  Algorithm \ref{alg:culstering_multi_stue} provides a tradeoff, i.e., the harmonic mean of the spectral efficiency is improved at the cost of higher energy per bit. However, the more  accurate the precoding is, the higher the gain in the harmonic mean of the spectral efficiency is, and the lower the energy price is.

\section{Concluding Remarks}

In this paper, we developed an efficient clustering algorithm for the device-to-device assisted VMIMO systems with limited feedback, and investigated the effect of the approximate precoding (transmit beamforming) and the \UE~density on the performance of this VMIMO system. As observed in the numerical simulations, the VMIMO system can significantly boost the performance of \emph{weak} sources while only slightly degrading that of \emph{strong} ones, thus leading to a considerable overall performance increment. 
These observations would suggest an approach of focusing the user clustering on weak users and not necessarily on strong users. In addition, it was shown that a single bit of feedback for the precoding weight  is sufficient to provide the majority of the gain (Tables \ref{tbl:harmonic_mean_imp} and \ref{tbl:epb_increase}). Nonetheless, only the repetition-based cooperative diversity gain has been exploited in this work; utilizing space-time-coded cooperative diversity or the multiplexing gain and studying the tradeoff between the diversity and multiplexing gains is a promising area  for future investigations.

\begin{table}[t]
	\begin{center}
		\caption{Improvement in the harmonic mean of the spectral efficiency for 
		the coverage threshold SINR$_{\text{{eff}} } = -10$ $d$B.} 
		\label{tbl:harmonic_mean_imp} 
		\begin{tabular}{c | c  c  c }
			\multirow{2}{*}{$\lambda$}& 
			\multicolumn{3}{c}{Algorithm \ref{alg:culstering_multi_stue}}\\
			\cline{2-4}
			&   $\nw = 1$ & $\nw = 2$ &  $\nw = 64$ \\
			\hline
			$0.002$  & $2\%$  & $4\%$  & $4\%$ \\
			$0.008$  & $13\%$ & $18\%$  & $18\%$ \\
			$0.064$ & $43\%$ &  $57\%$   & $58\%$ \\
		\end{tabular}
	\end{center}
\end{table}

\begin{table}[t]
	\begin{center}
		\caption{Increase in the average consumed energy per bit $\overline{E}_b $ (J/$b$) due to the VMIMO setup. 
		} 
		\label{tbl:epb_increase} 
		\begin{tabular}{c | c c c }
			\multirow{2}{*}{$\lambda $ }& 
			\multicolumn{3}{c}{Algorithm \ref{alg:culstering_multi_stue}}\\
			\cline{2-4}
			&   $\nw = 1$ & $\nw = 2$ & $\nw = 64$ \\
			\hline
			$0.002$ & $0.2\%$  & $-6\%$ & $-6\%$ \\
			$0.008$ & $11\%$ & $3\%$ & $3\%$\\
			$0.064$  & $81\%$ &  $69\%$  & $68\%$ \\
		\end{tabular}
	\end{center}
\end{table}

\section{Acknowledgement}
The first author would like to thank Mr. Adrien Comeau of Ericsson Canada for his constructive feedbacks on this subject.

\appendices

\section{Proof of Theorem \ref{thm:average_throughput}} \label{sec:proof_thm_avg_thp_single_source}
In what follows, denote by $\mathcal{E}_s(r)$ (or $\mathcal{E}_s$ for brevity) the set of all \emph{eligible} relays to participate as a DF relay for the source $s$ when its spectral efficiency is set to $r$. In other words, $\mathcal{E}_s$ includes relays whose link budget to the source  $s$ allow an achievable rate  greater than $2r$. The condition for the relay $l$ to be in the set $\mathcal{E}_s$ can be written as 
\begin{IEEEeqnarray}{rCl}
l \in \mathcal{E}_s(r) &\leftrightarrow& \C{ \abs{h_{sl}}^2\frac{P_s}{\pn} } \ge 2r. 
\label{eqn:eligibility_condition} 
\end{IEEEeqnarray}

To derive the ASE, the probability of the event $\set{R_{s} > r}$ for a given $r$ and a realization of the spatial relay distribution and shadowing needs to be obtained where $R_s$ is the overall spectral efficiency of the source $s$. 
To achieve this goal two different cases must be considered: 
\begin{itemize}
\item[1.] $r \le \C{\gamma}$: In this case, the source  transmits its message in one phase transmission without recruiting relays. Therefore, in this case, $\Pr\set{R_s > r} = \Pr\set{c_{sd} > r} = 1$.  \\
\item[2.] $r > \C{\gamma}$: In this case, a non-zero probability of $\set{c_{sd} > r}$ is feasible only if a spectral efficiency increment is achieved by employing  relays in the two-phase transmission. 
\end{itemize}
Considering the second case hereafter, by applying the law of total probability, it can be written  
\begin{IEEEeqnarray*}{rCl}
\Pr\set{R_{s} > r} = \sum_{k\ge 1}\Pr\set{R_{s} > r \big| |{\mathcal{E}_s}| = k}  \Pr\set{|{\mathcal{E}_s}| = k}.
\end{IEEEeqnarray*}
Considering \eqref{eqn:rate_vmimo_ue}, we can write 
\begin{IEEEeqnarray}{rCl}
\Pr\set{R_s > r \big| |{\mathcal{E}_s(r)}| = k} & =&   \Pr\set{c_s > 2r \big| |{\mathcal{E}_s}(r)| = k} \quad
\end{IEEEeqnarray}
where 
\begin{IEEEeqnarray}{rCl}\label{eqn:c_t_proof}
c_s &=& \C{ \abs{h_{sd}}^2 \frac{P_s}{\pn}+ \abs{  {\textstyle \sum_{i \in \set{s} \cup\mathcal{E}_s }{ h_{id}w_{i} \sqrt{ P_i } }  } }^2 / \pn}\nonumber\\
& = & \C{ \gamma  + \gamma \abs{ {\textstyle \sum_{i\in\set{s}\cup \mathcal{E}_s }  }  e^{j\hat{\theta}_i} }^2   }. 
\end{IEEEeqnarray}
$\hat{\theta}_i = \theta_i + \theta_{w_i}$ is the phase of the channel gain of \UE~$i$ to the AP when its phase is shifted by the preceding weight $w_i$. It can be seen form \eqref{eqn:c_t_proof} that $c_s$ is maximized when $\abs{{\textstyle \sum_{i\in\set{s} \cup \mathcal{E}_s }  }  e^{j\hat{\theta}_i} }$ is maximized. The maximum value of the latter quantity is $k+1$ when all $\hat{\theta}_i$'s are equal for $i \in \{s\}\cup \mathcal{E}_s $. Given this assumption, we can write
\begin{IEEEeqnarray}{rCl}
\IEEEeqnarraymulticol{3}{l}{
\Pr\set{c_s > 2r \big|  |\mathcal{E}_s(r)| = k} 
}\nonumber \\ \quad 
&\le&  \Pr\set{\C{ \gamma \left[1 + (k+1)^2\right] }> 2r  \big|  |\mathcal{E}_s(r)| = k}\nonumber\\
&=& \Pr\set{ k \ge k_r \big|  |\mathcal{E}_s(r)| = k } \label{eqn:pr_k_greater_k_r}
\end{IEEEeqnarray}
where $k_r$ can be easily computed and is given in the Theorem \ref{thm:average_throughput}. Moreover, the event $\set{|\mathcal{E}_s(r) | = k}$ can be translated as the event such that $k$ \UEs~(indexed by $l$, $l\in\set{1,\cdots,N_{\UE}}\backslash \set{s}$) are spatially distributed such that
\begin{IEEEeqnarray}{rCl}\label{eqn:c_s_t_r}
\C { \frac{G}{{d_{sl}} ^ {\alpha}} 10^{\sigma_{dB} V_l/10} \frac{P_s}{\pn}  } \ge 2r \label{eqn:c_greater_r}
\end{IEEEeqnarray}
where $V_l$ is a normal Gaussian random variable. 
The inequality in \eqref{eqn:c_greater_r} can be rearranged as 
\begin{IEEEeqnarray}{rCl}
d_{sl} e^{-\sigma V_l} \le   d_r
\label{eqn:threshold_distance}
\end{IEEEeqnarray}
where 
$
\sigma = 0.1 \ln(10) \sigma_{dB} / \alpha
$
, and
\begin{IEEEeqnarray*}{rCl}
d_r =   \left[\frac{G P_s / \pn}{2^{2r} - 1} \right] ^{1/\alpha}.
\end{IEEEeqnarray*}
Therefore, the eligibility condition in \eqref{eqn:eligibility_condition} can be simplified as 
\begin{IEEEeqnarray}{rCl}
l \in \mathcal{E}_s(r) &\leftrightarrow &  d_{sl} e^{-\sigma V_l} \le  d_r.
\label{eqn:eligibility_condition_2}
\end{IEEEeqnarray}
For $\sigma_{dB} = 0$ (no shadowing), let $\Psi_s(r)$ be the area of the disc centred at the location of the source $s$ with radius $d_r$ such that \eqref{eqn:c_s_t_r} holds ($\Psi_s(r)$ defined in the Theorem \ref{thm:average_throughput}). The probability of $k$ Poisson arrivals in the area $\Psi_s(r)$ can be obtained as  
\begin{IEEEeqnarray}{rCl}
\Pr\set{ |\mathcal{E}_s(r)| = k } = \frac{\left[\lambda \Psi_s(r)\right]^k}{k!}e^{-\lambda \Psi_s(r)}. \label{eqn:p_rs_greater_r}
\end{IEEEeqnarray}
The probability in \eqref{eqn:pr_k_greater_k_r} is either zero when $k<k_r$ or one when $k\ge k_r$. Therefore, the complementary CDF of the rate $R_s$ can be upper bounded as    
\begin{IEEEeqnarray}{rCl}
\Pr\set{R_s > r} &\le& \sum_{k\ge k_r}\Pr\set{|\mathcal{E}_s(r)| = k}\nonumber\\
&=& \sum_{k \ge k_r} \frac{\left[\lambda \Psi_s(r)\right]^k}{k!}e^{-\lambda \Psi_s(r)}. 
\end{IEEEeqnarray}
Since $R_s$ is a non-negative random variable, we can express the expected value of $R_s$ as 
\begin{IEEEeqnarray}{rCl}
E\left[ R_s \right] &=& \int_{0}^{\infty} \Pr\set{R_s > r} \textnormal{d}r\nonumber \\ 
&=& \int_{0}^{\C{\gamma}} \textnormal{d}r +  \int_{\C{\gamma}}^{\infty} \Pr\set{R_s> r} \textnormal{d}r  \nonumber \\ 
& \le &{\C{\gamma}} + \int_{\C{\gamma}} ^{\infty }   \sum_{k \ge k_r} \frac{\left[\lambda \Psi_s(r)\right]^k}{k!}e^{-\lambda \Psi_s(r)}    \textnormal{d}r. 
 \end{IEEEeqnarray}


\section{Proof of Lemma \ref{thm:average_throughput_lns}} \label{sec:upper_bound_shadowing}

For the case with the log-normal shadowing ($\sigma_{dB} > 0$), we adopt an asymptotic  method to find  $\Pr\set{|\mathcal{E}_s| = k}$ in \eqref{eqn:p_rs_greater_r}.

Assume that relays are randomly located according to a shrinking Bernoulli process on a disk with radios $d_{\max}$ centred at the source.
For this process, $2\pi \lambda \delta d_i$ is the probability that there exists one user in the distance interval of  $[d_i - \delta, d_i)$ of the source node where $d_i = i\delta$. 
This process asymptotically approaches a Poisson process with rate $\lambda$ as $\delta \rightarrow 0$~\cite{Gallager1996}.
Let $n_{\max} = \lfloor d_{\max} / \delta \rfloor$. For $i = 1, \cdots, n_{\max}$, let
\begin{IEEEeqnarray*}{rCl}
E_i &\stackrel{\Delta}{=}& \set{\text{exists one relay $l$ s.t. } d_{sl} e^{-\sigma V_l} \in [(i-1) \delta, i\delta)}, \\ 
E_i' &\stackrel{\Delta}{=}& \set{\text{exists one relay at } d_i }.
\end{IEEEeqnarray*}
Furthermore, by definition,  let 
\begin{IEEEeqnarray}{rCl}
\pji = \Pr\set{ j \delta e^{-\sigma V} \in [ (i-1) \delta, i\delta) | E_j'}
\end{IEEEeqnarray}
for $i, j \in \set{1, 2, \cdots, n_{\max}}$ where $V$ is a normal Gaussian random variable. 
%
%
This probability can be computed as  
\begin{IEEEeqnarray}{rCl}
\pji & \stackrel{(a)}{=} & \Pr \set{ (i-1)\delta \le j\delta e^{\sigma V} < i\delta } \nonumber \\ 
& = & \frac{1}{2} \text{erf}\left(\frac{1}{\sqrt{2} \sigma} \ln \frac{i}{j} \right) - 
\frac{1}{2} \text{erf}\left(\frac{1}{\sqrt{2} \sigma} \ln \frac{i-1}{j} \right)
\end{IEEEeqnarray}
where $(a)$ holds since  $V$ and $-V$ have the same  distributions. 
The probability of the event $E_i$, $p_i$, can be written as 
\begin{IEEEeqnarray}{rCl}
p_i 
& =& \sum_{j=1}^{n_{\max}} \Pr\set{ j \delta e^{-\sigma V} \in [ (i-1) \delta, i\delta) | E_j'} \Pr\set{E_j'} \nonumber\\ 
& =& 2 \pi \lambda \delta^2  \sum_{j=1}^{\lfloor d_{\max} / \delta \rfloor} j \pji.
\end{IEEEeqnarray}
Now, the event $\set{|\mathcal{E}_s(r)| = k}$ is tantamount to the event of $k$ successes out of $n$ YES/NO trials with probability of successes $p_1, p_2, \cdots, p_n$ which has a Poisson binomial distribution ($n = \lfloor d_r / \delta \rfloor$).
%
Let   $\Pr\set{|\mathcal{E}_s(r)| = k} = \lim_{\delta \rightarrow 0} \pi_k(r, \delta) $. Using a recursive method~\cite{shah1973}, we have
\begin{IEEEeqnarray}{rCl}\label{eqn:pi_k}
\pi_k (r, \delta) = \left\{ 
\begin{IEEEeqnarraybox}[][c]{ls} 
\prod_{i=1}^{ n} (1-p_i),  &  $k = 0$,  \\ 
\frac{1}{k}\sum _{j = 1}^{k} (-1)^{j-1} \pi_{k - j}(r, \delta) T_j, \quad & $ k > 0$
\IEEEstrut
\end{IEEEeqnarraybox}
\right. 
\end{IEEEeqnarray}
where 
\begin{IEEEeqnarray*}{rCl}
T_j = \sum_{i=1}^{ n} \left(\frac{p_i}{1 - p_i} \right)^j. 
\end{IEEEeqnarray*}


\section{Proof of Proposition \ref{prop:best_relay_set}}\label{sec:proof_prop_best_relay_set}
We first claim that in Algorithm \ref{alg:culstering_one_stue}, $r_{\min}$ is a non-increasing function of $\abs{\mathcal{A}_s}$. 
This claim can be verified as follows: considering that the set of available \UEs~for assistance, $\mathcal{E}_s^{\text{sorted}}$, is sorted with respect to the link rates $R_{sl}$, $l \in \mathcal{E}_s^{\text{sorted}}$, each addition to the $\mathcal{A}_s$ either limits $r_{\min}$ to a lower value or leaves it intact. 

Moreover, we claim that in Algorithm \ref{alg:culstering_one_stue}, $\gamma_{\text{AP}}$ is a non-decreasing function of $\abs{\mathcal{A}_s}$ if the precoding resolution is high enough ($\nw = \infty$), 
and this claim can be easily verified by observing that  in every step of the \textbf{for} loop, $\gamma_{\text{AP}}$ either remains the same or becomes larger.

Considering these two claims,  since $r_{\min}$ is increasing through the loop and $\gamma_{\text{AP}}$ is decreasing through the loop, there is an optimal point in the loop that adding or deleting any \UE~to or from  $\mathcal{A}_s$ would only decrease $r_{\text{new}}$.  


\bibliographystyle{IEEEtran} 
\bibliography{vmimo} 

\begin{IEEEbiography}{S. Hossein Seyedmehdi} received the B.Sc. degree in electrical engineering form Iran University of Science and Technology, Tehran, Iran in 2005 and the M.Eng. degree in electrical and computer engineering from National University of Singapore, Singapore, in 2008. He is currently pursuing the Ph.D. degree in electrical and computer engineering at University of Toronto, Toronto, Canada. 

In the summers of 2012 and 2013, he was a Research Intern at Ericsson, Ottawa, Canada where he worked on Smart Applications on Virtual Infrastructures (SAVI) project. His research interests include Information Theory for Wireless Communications, Signal Processing for MIMO channels, Algorithms for Radio Access Networks, and Future Radio Technologies. 
\end{IEEEbiography}

\begin{IEEEbiography}{Gary Boudreau (M'84--SM'11)}
received a B.A.Sc. in Electrical Engineering from the University of Ottawa in 1983, an M.A.Sc. in Electrical Engineering from Queens University in 1984 and a Ph.D. in electrical engineering from Carleton University in 1989. 

From 1984 to 1989 he was employed as a communications systems engineer with Canadian Astronautics Limited after which from 1990 to 1993 he worked as a satellite systems engineer for MPR Teltech Ltd. For the period spanning 1993 to 2009 he was employed by Nortel Networks in a variety of communication systems and management roles within the CDMA and LTE basestation product groups.  In 2010 he joined Ericsson Canada where he is currently working in the LTE systems architecture group. His interests include digital and wireless communications as well as digital signal processing.
\end{IEEEbiography}

\end{document}